\newcommand{\compilefullversion}{false} 
\newcommand{\compilehidecomments}{false}
\definecolor{red}{HTML}{E51400} 
\definecolor{blue}{HTML}{0050EF} 
\definecolor{green}{HTML}{008A00} 
\definecolor{purple}{HTML}{AA00FF} 
\definecolor{orange}{HTML}{FF7F00}
\definecolor{gray}{HTML}{848482} 
\DeclareMathOperator*{\argmax}{argmax}
\DeclareMathOperator*{\argmin}{argmin}
\newcommand{\E}{\mathbb{E}}
\newcommand{\I}{\mathbb{I}}
\newcommand{\R}{\mathbb{R}}
\newcommand{\cP}{\mathcal{P}}
\newcommand{\cR}{\mathcal{R}}
\newcommand{\cW}{\mathcal{W}}
\newcommand{\rroot}{{\rm root}}
\newcommand{\covered}{{\it covered}}
\newcommand{\delay}{{\it delay}}
\newcommand{\shadow}[1]{\hat{#1}}
\newcommand{\OPT}{{\rm OPT}}
\newcommand{\topk}{{\it topk}}
\newcommand{\PRR}{\mbox{\sf P-RR}}
\newcommand{\NodeSelection}{{\sf NodeSelection}}
\newcommand{\BIM}{\mbox{BIM}}
\newcommand{\PIM}{\mbox{PIM}}
\newcommand{\BPIM}{\mbox{BPIM}}
\newcommand{\IMM}{\mbox{\sf IMM}}
\newcommand{\ASVRR}{\mbox{\sf ASV-RR}}
\newcommand{\IMMBIM}{\mbox{\sf IMM-BIM}}
\newcommand{\IMMPIM}{\mbox{\sf IMM-PIM}}
\newcommand{\IMMBPIM}{\mbox{\sf IMM-BPIM}}
\newcommand{\alg}[1]{\textnormal{\textsf{#1}}}
\newcommand{\newalg}[1]{{#1}}
	\newcommand{\OnlyInFull}[1]{}
	\newcommand{\OnlyInShort}[1]{#1}
	\newcommand{\OnlyInFull}[1]{#1}%
	\newcommand{\OnlyInShort}[1]{}%
    \newcommand{\wei}[1]{}
    \newcommand{\weiran}[1]{}
    \newcommand{\lichao}[1]{}
    \newcommand{\backup}[1]{}
    \newcommand{\wei}[1]{{\color{blue!50!black}  [\text{Wei:} #1]}}
    \newcommand{\weiran}[1]{{\color{red!70!black} [\text{Weiran:} #1]}}
    \newcommand{\lichao}[1]{{\color{green!90!black} [\text{Lichao:} #1]}}
    \newcommand{\backup}[1]{{\color{green!50!black} #1}}
\begin{document}

\title{Influence Maximization with Spontaneous User Adoption}


\author{Lichao Sun}
\email{james.lichao.sun@gmail.com}
\affiliation{%
	\institution{University of Illinois at Chicago}
	\city{Chicago}
	\state{IL}
	\postcode{60607}
}

\author{Albert Chen}
\email{abchen@linkedin.com,}
\affiliation{%
	\institution{LinkedIn Corporation}
	\city{Sunnyvale}
	\state{CA}
	\postcode{94085}
}

\author{Philip S. Yu}
\email{psyu@uic.edu}
\affiliation{%
	\institution{University of Illinois at Chicago}
	\city{Chicago}
	\state{IL}
	\postcode{60607}
}

\author{Wei Chen}
\authornote{Corresponding author}
\email{weic@microsoft.com}
\affiliation{%
	\institution{Microsoft Research}
	\city{Beijing}
	\state{China}
	\postcode{100080}
}

%
%
%
%
%
%


\begin{abstract}

We incorporate the realistic scenario of spontaneous user adoption into influence propagation (also refer to as self-activation) and propose
	the self-activation independent cascade (SAIC) model: nodes may be self activated besides
	being selected as seeds, and influence propagates from both selected seeds and
	self activated nodes.
Self activation occurs in many real world situations; for example, people naturally share product recommendations with their friends, even without marketing intervention.
Under the SAIC model, we study three influence maximization problems: 
	(a) boosted influence maximization (BIM) aims to maximize the total influence spread
		from both self-activated nodes and $k$ selected seeds;
	(b) preemptive influence maximization (PIM) aims to find $k$ nodes that, if self-activated, can reach the
		most number of nodes before other self-activated nodes; and
	(c) boosted preemptive influence maximization (BPIM) aims to select $k$ seed that are guaranteed
		to be activated and can reach the most number of nodes before other self-activated nodes.
We propose scalable algorithms for all three problems and prove that they achieve 
	$1-1/e-\varepsilon$ approximation for BIM and BPIM and $1-\varepsilon$ for PIM, for
	any $\varepsilon > 0$.
Through extensive tests on real-world graphs, we demonstrate that our algorithms outperform the baseline algorithms
	significantly for the PIM problem in solution quality, and also outperform the baselines for BIM and BPIM
	when self-activation behaviors are nonuniform across nodes. 

\end{abstract}

%
%
\begin{CCSXML}
    <ccs2012>
    <concept>
    <concept_id>10002951.10003260.10003272.10003276</concept_id>
    <concept_desc>Information systems~Social advertising</concept_desc>
    <concept_significance>300</concept_significance>
    </concept>
    <concept>
    <concept_id>10002951.10003260.10003282.10003292</concept_id>
    <concept_desc>Information systems~Social networks</concept_desc>
    <concept_significance>300</concept_significance>
    </concept>
    <concept>
    <concept_id>10003752.10003753.10003757</concept_id>
    <concept_desc>Theory of computation~Probabilistic computation</concept_desc>
    <concept_significance>300</concept_significance>
    </concept>
    <concept>
    <concept_id>10003752.10003809.10003716.10011141.10010040</concept_id>
    <concept_desc>Theory of computation~Submodular optimization and polymatroids</concept_desc>
    <concept_significance>300</concept_significance>
    </concept>
    </ccs2012>
\end{CCSXML}

\ccsdesc[300]{Information systems~Social advertising}
\ccsdesc[300]{Information systems~Social networks}
\ccsdesc[300]{Theory of computation~Probabilistic computation}
\ccsdesc[300]{Theory of computation~Submodular optimization and polymatroids}

\keywords{preemptive influence maximization, reverse influence sampling}

\maketitle

\sloppy

\vspace{-4pt}
\section{Introduction}
\vspace{-4pt}
Influence maximization is the task of finding a small set of seed nodes to generate the largest possible influence spread in
	a social network~\cite{kempe03}.
It models the important viral marketing applications in social networks,
	and many aspects of influence maximization have been extensively studied in the research literature.
In most studies, influence propagation starts from a set of seed nodes, which are selected before the propagation starts.
Propagation starts from all seed nodes together at the same time and proceeds in either discrete or continuous time to 
	reach other nodes in the network, and the objective to maximize is typically the {\em influence spread}, defined as 
	the expected number of nodes activated through the stochastic diffusion process.
	
In practice, however, when a marketing campaign starts, users' reactions to the campaign are not synchronized at the same time. 
Some users react to the campaign immediately, while others may react after a significant delay. 
Moreover, propagation may not only start from seed users that the campaign originally selected.
It is possible that other users may spontaneously react to the campaign and also propagate the information and influence in the campaign.
We call this phenomenon {\em self activation},
	which is in contrast to the seed activation by the external force.
While seed activation by the external force 
	typically requires a marketing budget to be successful, self activations are spontaneous and do not require
	a budget.
Self activation may also lead to other interesting objectives one may want to optimize, as we will discuss shortly.
To give some concrete examples of self-activation without marketing influence,
people may naturally share consumer product recommendations with their friends and businesses may attract clients through organic referrals.
A product or business marketing team interested in influence maximization should consider this natural activity.
Therefore, self activation is a realistic phenomenon in viral marketing, but it has not been well addressed in the influence
	maximization literature.
In this paper, we incorporate self activation into the influence propagation model and 
	provide a systematic study on the impact of self activation to the influence maximization task.
	
We first incorporate self activation with the classical independent cascade (IC) model to propose the self-activation
	independent cascade (SAIC) model of influence propagation.
In the SAIC model, social network is modeled as a directed graph, and each node $u$ has a {\em self-activation probability} $q(u)$
	indicating the probability of $u$ being self activated by the campaign.
If activated, node $u$ also has a random {\em self-activation delay} $\delta(u)$ sampled from distribution $\Delta(u)$, such that 
	$u$ is self-activated at time $\delta(u)$ after the campaign starts at time $0$.
A seed node $v$ selected explicitly by the campaign will be deterministically activated at time $\delta(v)$, equivalently as 
	saying that its self-activation probability $q(v)$ is boosted to $1$.
Propagation from seed nodes and self-activated nodes follows the classical IC model: if a node $u$ is activated, then it has
	one chance to activate each of its out-going neighbor $v$ with success probability $p(u,v)$.
We further extend the classical IC model by allowing real-time delay on the edges: if $u$ would successfully activate $v$, then
	this activation would occur after the random {\em propagation delay} of $d(u,v)$ sampled from a distribution $D(u,v)$, from the time $u$
	is activated.
Thus, overall the propagation starts from the seed nodes and self-activated nodes, and a node is activated either because it is
	a seed, or because it is self-activated, or because it is activated by a neighbor, and the activation time is the earliest time
	when one of the above activation happens.
Once a node is activated, it stays as activated.
The reason we allow real time delays is to make it more realistic when we study the new objective functions discussed below.

With the incorporation of self activation into the SAIC model, we are able to study several different influence maximization objectives.
The first objective is closer to the classical influence maximization, where we aim at select $k$ seed nodes 
	to maximize the total influence spread after boosting the self-activation probabilities of seed nodes to $1$.
We refer to this objective as {\em boosted influence spread}, denoted $\sigma^B(S)$ for seed set $S$, and the corresponding 
	influence maximization problem as {\em boosted influence maximization (BIM)}.
Note that the total boosted influence spread $\sigma^B(S)$ counts both nodes activated by seeds and nodes activated by 
	self-activated nodes.

Besides $\sigma^B(S)$, self activation further allows us to study some interesting new objectives.
Conceptually, in the SAIC model, we can view a node as an {\em organic influencer} if
	the node is frequently and easily self-activated and its influence often reaches many
	other nodes first before other self-activated nodes.
To model this, we precisely define the {\em preemptive influence spread} 
	$\rho(A)$ of node set $A$ as the expected number of nodes that some node $u\in A$ 
	reaches first, if $u$ is self-activated, 
	before other self-activated nodes.
Then the {\em preemptive influence maximization (PIM)} problem is to identify the set of $k$
	nodes that has the largest preemptive influence spread, which models the task of
	identifying top organic influencers in a network.
Furthermore, we define {\em boosted preemptive influence spread} $\rho^B(S)$ of 
	a seed set $S$
	as the preemptive influence spread of set $S$ after we boost the self-activation
	probability of every node in $S$ to $1$.
Then the {\em boosted preemptive influence maximization (BPIM)} problem is to find
	$k$ seed nodes with the maximum boosted preemptive influence spread $\rho^B(S)$.
BPIM corresponds to the viral marketing campaign that focuses on the reach of the campaign
	from the selected seed nodes rather than self-activated nodes, because for example the
	seed nodes carry high-quality and more effective campaign messages.

	
For the above objectives, we first study their properties and show that $\sigma^B$ and $\rho^B$ are monotone and submodular, while
	$\rho$ is additive.
Then, based on these properties and the reverse influence sample (RIS) approach~\cite{BorgsBrautbarChayesLucier}, we design scalable approximation algorithms for 
	the three problems BIM, BPIM, and PIM.
Even though our algorithms are patterned from the existing algorithm, the new objectives studied here requires nontrivial adaptation
	of the algorithm.
Especially for BPIM and PIM, we need to redesign the reverse simulation procedure to generate what we call {\em preemptive
	reverse reachable (P-RR) sets}, which are more sophisticated than the standard reverse reachable (RR) sets~\cite{BorgsBrautbarChayesLucier,tang14,tang15}.
We prove that our algorithms solve BIM and BPIM with approximation ratio $1-1/e -\varepsilon$ for any $\varepsilon > 0$, and
	solve PIM with approximation ratio $1-\varepsilon$, and all algorithms can run in time near linear to the graph size.
	
Finally, we conduct extensive experiments on real-world datasets and compare our algorithms with related baselines solving 
	classical influence maximization or influence-based network centrality.
We demonstrate that for the PIM problem, our {\IMMPIM} algorithm 
	significantly outperforms the baselines on the achieved preemptive
	influence spread in all test cases, showing that utilizing the knowledge of self activation is important in finding organic influential nodes in a network.
For the BIM and BPIM problems, our algorithms have minor improvements in influence spread in some cases where the self-activation behaviors of the nodes
	are non-uniform, which algorithms such as {\IMM} that are oblivious to 
	self-activation may be used for these problems, 
	but it may still be beneficial to use our algorithms designed for the self-activation scenarios.
All our algorithms can scale to large networks with hundred thousands of nodes and edges.

To summarize, we have the following contributions:
	(a) we incorporate the realistic self-activation scenario into influence propagation and study three influence maximization
		problems PIM, BPIM, and BIM due to this incorporation;
	(b) we design scalable algorithms for all three problems with theoretical approximation guarantee; and
	(c) we demonstrate through experiments that our algorithms provide significantly better results for PIM, and also outperform
		other algorithms in non-uniform self-activation scenarios.

Due to the space constraint, some experiments, some pseudocode and all proofs are moved to the extended version \cite{sun2019self}.

\vspace{-5pt}
\subsection{Related Work}

Domingos and Richardson are the first to study influence maximization~\cite{domingos01,richardson02}, 
	but \citet{kempe03} are the first to formulate the problem as a discrete
	optimization problem, describe the independent cascade (IC), linear threshold and other models, 
	and propose to use submodularity and greedy algorithm to solve influence maximization.
Since then influence maximization has been extensively studied in various fronts, 
	including its scalability~\cite{ChenWY09,ChenYZ10,WCW12,JungHC12,BorgsBrautbarChayesLucier,tang14,tang15,NguyenTD16,sun2018multi}, 
	robust influence maximization~\cite{ChenLTZZ16,HeKempe16}, 
	competitive and complementary
	influence maximization~\cite{kempe07,BAA11,chen2011influence,HeSCJ12,lu2015competition}, etc.

\citet{BorgsBrautbarChayesLucier} propose the novel reverse influence sampling (RIS) approach that guarantees both the approximation ratio
	and near-linear running time, and it has been improved in a series of studies~\cite{BorgsBrautbarChayesLucier,tang14,tang15,NguyenTD16}.
In this paper, we adapt the {\IMM} algorithm in~\cite{tang15} mainly because of its clarity, and other algorithms such as {\sf D-SSA}
	of~\cite{NguyenTD16} can
	be plugged in too.

Our PIM problem has connection with the Shapley centrality proposed in \cite{chen2017interplay}.
In particular, in the special case when all nodes have self-activation probability $1$, uniform self-activation delay distribution, 
	and propagation delays can be ignored, the preemptive influence spread of a node coincides with its Shapley centrality.
Thus, the general preemptive influence spread studied in this paper is more realistic than the Shapley centrality, and we compare
	against the Shapley centrality algorithm in our experiments and show that our algorithm achieves much better PIM result.

\vspace{-5pt}
\section{Model and Problem Definition} \label{sec:model}

\subsection{Self-Activation Propagation Model}


A social network is modeled as a directed
graph $G = (V, E)$,
where $V$ is a finite set of vertices or nodes, and
$E \subseteq V \times V$ is the set of directed edges connecting pairs of nodes.
Let $n=|V|$ and $m=|E|$.
In this paper, we study the influence propagation model
where every node has a chance to be self-activated, even without being selected as a seed.
In this model, at time $0$ we assume a marketing campaign is started, and then
	every node may be self-activated by this campaign, and this activation may occur
	after a random delay from the beginning of the campaign.
Technically node self activation is governed by the following set of parameters.


\begin{definition} \label{def:SAProbDelay}
	(Self-Activation Probability and Delay).
	In a social network, every node $u \in V$ can be self-activated as a seed by the campaign with
	{\em self-activation probability} $q(u) \in [0,1]$.
	If $u$ is self-activated, then it is activated after a random delay $\delta(u)\in [0,+\infty)$ drawn from a 
	{\em self-activation delay} distribution $\Delta(u)$.
\end{definition}

We combine self activation and independent cascade model~\cite{kempe03}
and further add real-time propagation delays on edges to obtain the self-activation independent cascade (SAIC) model.
In the SAIC model, every node $u \in V$ is associated with self-activation probability and delay
	as defined in~Definition~\ref{def:SAProbDelay}.
Meanwhile, every edge $(u,v) \in E$ is associated with two parameters:
1) a {\em propagation probability} $p(u,v) \in (0, 1]$ ($p(u,v)=0$ if and only if $(u,v) \notin E$);
2) a {\em propagation delay} distribution $D(u,v)$ with range $[0,+\infty)$.
The SAIC model proceeds following the rules below:
\begin{enumerate}
	\item For each node $u \in V$, it is self-activated with probability $p(u)$,
	and if so it is activated at time $\delta(u)$ drawn from $\Delta(u)$ (denoted as $\delta(u)\sim \Delta(u)$),
	unless it has been previously activated by other nodes before $\delta(u)$.
	\item For any node $u \in V$ activated at time $t$ (self-activated or neighbor-activated),
	it tries once to activate each of its out-going neighbors $v$ 
	with propagation probability $p(u,v)$.
	If the activation is successful,
	the propagation delay $d(u,v)$ is sampled from $D(u,v)$
	and $v$ would be activated at time $t + d(u,v)$,
	unless $v$ has been activated before this time.
	\item A node $v$ is activated (both self-activated or neighbor-activated) at earliest time $t$
	when any activation of $v$ would happen, and $v$ stays active afterwards.
\end{enumerate}
The above description of the SAIC model only considers the propagation starting from the
	self-activated nodes.
We could further include externally selected {\em seed nodes} in the model as follows.
Let $S$ be the set of (externally selected) seed nodes.
Then besides the three rules above, we have one more rule:
\begin{enumerate}
	\item[(4)] For each seed $u\in S$, its self-activation probability is boosted to $1$, that is, 
	it is for sure activated, and it is activated at time $\delta(u)\sim \Delta(u)$,
	unless it has been activated by its neighbors before that time.
\end{enumerate}

Note that when the self-activation probabilities of all nodes are $0$, the self-activation delays
	for all nodes take deterministic value $0$, 
	and propagation delays of all nodes take 
	deterministic value $1$, 
	SAIC model falls back to the classical independent cascade (IC) model.
We add self activation to model the realistic situation that some users may spontaneously
	help propagating the marketing campaign.
We introduce delay parameters because in reality people take actions asynchronously, and when we
	consider preemptive influence spread (to be defined shortly), delay factor does matter on who
	could claim credits on the activation of each node.

The SAIC model can be equivalently described as propagations in a {\em possible world model}.
A possible world $W$ contains all randomness in a SAIC propagation.
In particular, $W$ is a tuple $(A_W,\delta_W,L_W,d_W)$, where $A_W$ is the random set of all self-activated
	nodes governed by the self-activation probability $q$, $\delta_W$ is the vector of self-activation
	delays sampled from $\Delta$, $L_W$ is the set of live edges governed by the propagation probability
	$p$ (i.e. each edge $(u,v)\in E$ is live with probability $p(u,v)$), and $d_W$ is the vector
	of propagation delays sampled from $D$.
We use $\cW(q,\Delta,p,D)$ to denote the probability space of all possible worlds, determined by
	parameters $q,\Delta,p,D$.
In a fixed possible world $W$, a live path $P$ is a path consisting of only live edges in $L_W$.
The propagation delay $d_W(P)$ of a live path is the sum of propagation delays on all live edges.
For a $u\in A_W$ and a live path $P$ from $u$ to $v$ in $W$, 
	let $T_W(P) = \delta_W(u) + d_W(P)$ be the {\em total delay} of live path $P$. 
We also use $T_W(u,v)$ to denote the minimum total delay among all live paths from $u$ to $v$.
Propagation starts from nodes in $A_W$ and follows the direction of all live edges and incurs delay on
	the edges, and a node $v$ is activated at a time $t$ if the minimum total delay of all 
	live paths from any node in $A$ to $v$ is $t$.
If we have externally selected seed set $S$, then the propagation starts from $S\cup A$ instead of $A$,
	and all other aspects remain the same.
It is easy to see that the above described possible world model is just a different way of stating
	the SAIC model, when we determine all randomness before the propagation starts.

\vspace{-5pt}
\subsection{Self-Activation Influence Maximization}


In the SAIC model, we consider several optimization tasks, each corresponding to a different
	objective function.
We start with an objective function that is close to the influence spread objective function in
	the classical IC model.
For a possible world $W$ and a seed set $S\subseteq V$, let $\Phi^B_W(S)$ denote the 
	set of nodes activated in the possible world $W$ 
	when $S$ is the seed set (nodes in $S$ have their self-activation probabilities boosted to $1$).
We call $\sigma^B(S) = \E_{W\sim \cW(q,\Delta,p,D)}[|\Phi^B_W(S)|]$ the {\em boosted influence spread}
	of seed set $S$ in the SAIC model.
Note that $\sigma^B(\emptyset)$ may be greater than $0$ since some nodes may be self-activated,
	and thus we use the word ``boosted'' to refer to the final influence spread due to the
	boosting of self-activation probabilities of seed nodes to $1$.
It is easy to see that the delay distributions $\Delta$ and $D$ do not affect the final set of
activated nodes, and thus $\sigma^B(S)$ doesn't depend on $\Delta$ and $D$.
Moreover, when $q \equiv 0$, $\sigma^B(S)$ becomes the influence spread in the classical IC model.
The first optimization task is to maximize $\sigma^B(S)$:

\begin{definition}[Boosted Influence Maximization]
\label{def:boostedInfMax}
{\em Boosted influence maximization (\BIM)}  
	is the optimization task with the directed graph $G=(V,E)$,
	the self-activation probabilities $q$, the propagation probabilities $p$, and a budget $k$ as the input,
	and the goal is to find an optimal seed set $S^*$ having at most
	$k$ nodes, such that the {\em boosted influence spread} of  $S^*$ is maximized, i.e.,
	$S^* = \argmax_{|S| \leq k} \sigma^B(S)$.
\end{definition}

	

The boosted influence spread and boosted influence maximization is close to the classical 
	influence spread and influence maximization concepts.
We introduce them as a stepping stone for the new concept of preemptive influence spread and
	preemptive influence maximization.
	
In the SAIC model, a natural metric measuring the influence ability of a set of nodes $S$ is
	the number of nodes that are actually activated due to the propagation from nodes in $S$, 
	not by other sources.
We define this metric formally as follows.
First, we assume that all delay distributions in $\Delta$ and $D$ are continuous functions and thus
	there is no probability mass at any given value.
Thus, it is safe to assume that in any possible world $W$, the total delay of any path would be different,
	since the worlds with paths having same delays have probability measure $0$.
Given a possible world $W = (A_W, \delta_W, L_W, d_W)$,  
	let $\cP_W(u,v)$ denotes a set of all live paths in $W$ starting from node $u$ and ending at $v$.
For a set of nodes $A$, we use $\Gamma_W(A)$ to denote the set of nodes $v$
	that have minimum total delays from some node in $A$ to $v$, i.e.
	$\Gamma_W(A) = \{v \mid  \exists u \in A_W \cap A, \exists P \in \cP_W(u,v), 
		T_W(P) < +\infty, \forall u' \in A_W \setminus A, \forall P' \in \cP_W(u',v), 
		T_W(P) < T_W(P')\}$.
Set $\Gamma_W(A)$ contains all activated nodes in $W$ whose activation sources are some nodes in $A$.
In other words, $A$ could claim full credits for activating $\Gamma_W(A)$ in $W$.
We define $\rho(A) = \E_{W \sim\cW(q,\Delta,p,D)} [|\Gamma_W(A)|] $ the {\em preemptive 
	influence spread} of node set $A$.
Intuitively, preemptive influence spread $\rho(A)$ measures the contribution of node set $A$
	in propagating and activating nodes in the SAIC model, when there is no externally selected seed
	nodes.

When we use preemptive influence spread $\rho(A)$ as our objective function, we have the
	preemptive influence maximization problem.
	
\begin{definition}[Preemptive Influence Maximization]
\label{def:preemptiveInfMax}
{\em Preemptive influence maximization (\PIM)}  
is the optimization task with the directed graph $G=(V,E)$,
the self-activation probabilities $q$, 
the self-activation delay distribution $\Delta$, the propagation probabilities $p$, the
	propagation delay distribution $D$, and a budget $k$ as the input,
and the goal is to find an optimal set $A^*$ having at most
$k$ nodes, such that the {\em preemptive influence spread} of  $A^*$ is maximized, i.e.,
$A^* = \argmax_{|A| \leq k} \rho(A)$.
\end{definition}


Preemptive influence maximization defined above corresponds to the application where a company may want to identify top organic influencers in an online social network, to study the characteristics that make them influential. These users could be targeted to propagate company-specific information without changing their activation behavior.

We remark that, when the self-activation probabilities of all nodes are $1$, self-activation delays
	of all nodes follow the same distribution, 
	and propagation delays of all nodes take 
	deterministic value $0$ (i.e. propagations are instantaneous), 
	preemptive influence spread of each individual node
	coincides with the Shapley centrality defined in~\cite{chen2017interplay}.

In preemptive influence maximization, we do not have the action of selecting seeds and changing the behavior
	of seeds.
We can further incorporate seed selection with preemptive 
	influence maximization as follows.	
In a possible world $W = (A_W, \delta_W, L_W, d_W)$,  for a seed set $S$, 
	similar to $\Gamma_W(A)$ we define $\Gamma^B_W(S)$ as the set of nodes that are activated due to
	$S$, after nodes in $S$ are selected as seeds and their self-activation probabilities are boosted
	to $1$, that is,
	$\Gamma^B_W(S) = \{v \mid  \exists u \in S, \exists P \in \cP_W(u,v), 
T_W(P) < +\infty, \forall u' \in A_W \setminus S, \forall P' \in \cP_W(u',v), 
T_W(P) < T_W(P')\}$.
We define the {\em boosted preemptive influence spread} $\rho^B(S)$ of a seed set $S$ 
	as $\rho^B(S) = \E_{W \sim\cW(q,\Delta,p,D)} [|\Gamma^B_W(S)|] $.
We can now use $\rho^B(S)$ as our third objective function to define the third optimization task:

\begin{definition}[Boosted Preemptive Influence Maximization]
	\label{def:BpreemptiveInfMax}
	{\em Boosted preemptive influence maximization (\BPIM)}  
	is the optimization task with the directed graph $G=(V,E)$,
	the self-activation probabilities $q$, 
	the self-activation delay distribution $\Delta$, the propagation probabilities $p$, the
	propagation delay distribution $D$, and a budget $k$ as the input,
	and the goal is to find an optimal seed set $S^*$ having at most
	$k$ nodes, such that the {\em boosted preemptive influence spread} of  $S^*$ is maximized, i.e.,
	$S^* = \argmax_{|S| \leq k} \rho^B(S)$.
\end{definition}

BPIM defined above models the applications where the marketing campaign wants to engage in explicit
	incentive for a set of seed nodes (e.g. giving out free sample products) so that the seed nodes
	will be boosted to adopt the campaign and start propagating it.
The difference between BPIM and BIM (Definition~\ref{def:boostedInfMax}) is that BPIM only optimizes
	for the number of nodes first activated by the seed nodes, while BIM optimizes for the number
	of all activated nodes.
The difference between BPIM and PIM (Definition~\ref{def:preemptiveInfMax}) is that BPIM actively
	boosts the self-activation probabilities of seed nodes while PIM does not change node behaviors, and
	this is also the reason in PIM we avoid calling the set $A$ selected as a seed set.
Although all three problems look similar on surface, they are different and require
	separate algorithmic solutions.
Moreover, PIM is very different from BIM and BPIM in that their algorithmic solutions would have
	different approximation guarantees.
This is because the preemptive influence spread has different properties from the other 
	two objective functions, as we discuss in the next section.
	
Finally, we remark that our proposed SAIC model is rich enough to consider realistic self-activation
	scenarios and all the above optimization tasks, while we do not introduce further parameters to
	complicate situation.
For example, for a seed node $u\in S$, we could further consider shortening its self-activation
	delay or boosting its self-activation probability partially instead of $1$.
The added flexibility would not significantly change our algorithm design and analysis but only 
	complicate our presentation.
On the other hand, assuming seed selection would not shorten the self-activation delay is also reasonable,
	since when an online marketing campaign starts, a user in the network need to come online to be
	aware of this marketing campaign, and thus the initial delay from the time the campaign starts
	to the user coming online is not likely affected by the user being selected as a seed.
Finally, the added parameters, such as self-activation delay distributions, self-activation probabilities
	are likely to be empirically obtained from real marketing campaigns, while the extraction of
	propagation	probabilities and propagation delays have been well studied in the literature
	(e.g.~\cite{SKOM10,GSKZS16}).

\vspace{-10pt}
\subsection{Properties of Influence Spread Functions} \label{sec:nonadaptive}

We now show the key properties of the three influence spread functions
	$\rho(\cdot)$, $\rho^B(\cdot)$, and $\sigma^B(\cdot)$, which are crucial for later algorithm
	design.
For a set function $f:2^V \rightarrow \R$, we say that $f$ is 
	a) {\em additive} if 
	for any subset $S\subseteq V$, $f(S) = \sum_{v\in S} f(\{v\})$;
	b) {\em monotone} if for any two subsets $S\subseteq T \subseteq V$, $f(S) \le f(T)$; and
	c) {\em submodular} if for any two subsets $S\subseteq T \subseteq V$ and an element $v\in V \setminus T$, 
		$f(T \cup \{v\}) - f(T) \le f(S\cup \{v\}) - f(S)$.
The following lemma summarizes the key properties of the three influence spread functions.

\begin{lemma}[Influence Spread Properties] \label{lem:properties}
(1) The preemptive influence spread function $\rho$ is additive;
(2) the boosted preemptive influence spread function $\rho^B$ is monotone and submodular; and
(3) the boosted influence spread function $\sigma^B$ is monotone and submodular.
\end{lemma}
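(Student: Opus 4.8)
The plan is to treat all three parts with a common template: fix a possible world $W=(A_W,\delta_W,L_W,d_W)$ drawn from $\cW(q,\Delta,p,D)$, prove the claimed structural property for the \emph{deterministic} per-world set functions $S\mapsto\abs{\Gamma_W(S)}$, $S\mapsto\abs{\Gamma^B_W(S)}$, and $S\mapsto\abs{\Phi^B_W(S)}$, and then take the expectation over $W$ — additivity, monotonicity, and submodularity are each preserved under averaging, since expectation is a non-negative linear operation. The only probabilistic input is the standing continuity assumption on the delay distributions: almost surely all finite total path delays $T_W(P)$ are distinct, so for every ordered pair $(u,v)$ the quantity $T_W(u,v)$ — the minimum total delay of a live path from $u$ to $v$, which is well-defined for \emph{every} $u\in V$ because $\delta_W$ assigns a self-activation delay to every node — is attained by a unique path. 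In particular every node $v$ that is reachable from $A_W$ in the live graph $(V,L_W)$ has a unique source $s_W(v)=\argmin_{u\in A_W}T_W(u,v)$.

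Part (1) is then immediate. In a fixed $W$ one checks that $v\in\Gamma_W(A)$ iff $v$ is reachable from $A_W$ and $s_W(v)\in A$: the competition in the definition of $\Gamma_W$ is between $A_W\cap A$ and $A_W\setminus A$, and the globally fastest self-activated node $s_W(v)$ decides which side wins. Hence $\Gamma_W(A)=\bigsqcup_{u\in A}\{v:s_W(v)=u\}$ is a disjoint union, so $\abs{\Gamma_W(A)}=\sum_{u\in A}\abs{\Gamma_W(\{u\})}$ pointwise in $W$, and taking expectations gives $\rho(A)=\sum_{u\in A}\rho(\{u\})$.

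For parts (2) and (3) I would show that each per-world function is a weighted coverage function (plus, for $\sigma^B$, an additive constant independent of $S$), which is automatically monotone and submodular, and then average over $W$. For $\sigma^B$: delays are irrelevant, so $\Phi^B_W(S)$ equals the set of nodes reachable from $S\cup A_W$ in $(V,L_W)$, i.e. $\Phi^B_W(S)=R_W(A_W)\cup\bigcup_{u\in S}R_W(u)$ where $R_W(u)$ is the set of nodes reachable from $u$; thus $\abs{\Phi^B_W(S)}=\abs{R_W(A_W)}+\big|\{v\notin R_W(A_W):S\cap M_W(v)\neq\emptyset\}\big|$ with $M_W(v)=\{u:v\in R_W(u)\}$, a constant plus a coverage function. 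For $\rho^B$: I would show that for each node $v$ there is a fixed set $M^B_W(v)=\{u\in V:T_W(u,v)\le\tau_W(v)\}$, where $\tau_W(v)=\min_{u'\in A_W}T_W(u',v)$ is the earliest time a self-activated node reaches $v$, such that $v\in\Gamma^B_W(S)\iff S\cap M^B_W(v)\neq\emptyset$. Granting this, $\abs{\Gamma^B_W(S)}=\sum_v\I[S\cap M^B_W(v)\neq\emptyset]$ is a sum of indicators of ``$S$ hits a fixed set'', each monotone and submodular, so $\abs{\Gamma^B_W(S)}$ and hence $\rho^B$ are monotone and submodular.

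The one genuinely delicate point — and where I expect to spend the effort — is establishing the hitting-set characterization of $\Gamma^B_W(S)$ for part (2). Unlike the ordinary triggering-model argument, here the competitor pool $A_W\setminus S$ \emph{shrinks} as $S$ grows, so one must verify that this shrinkage creates no newly credited nodes beyond those predicted by $M^B_W(v)$. I would do this by sorting all nodes that can reach $v$ in increasing order of $T_W(\cdot,v)$ and locating the first self-activated one, $c_W(v)$ (so $T_W(c_W(v),v)=\tau_W(v)$): the nodes strictly preceding $c_W(v)$ in this order are not self-activated, hence never enter the competitor pool and act as purely ``latent seeds'', so $v$ is credited to $S$ exactly when $S$ meets the prefix of the order up to and including $c_W(v)$, which is precisely $M^B_W(v)$. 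Once this case analysis is carried out carefully (including the degenerate cases $\tau_W(v)=+\infty$ and $v\in A_W$), parts (2) and (3) reduce to the standard fact that non-negative combinations of coverage functions are monotone and submodular, and the averaging over $W$ completes the proof; part (1) is already complete and, unlike (2), part (3) does not even need the continuity assumption.
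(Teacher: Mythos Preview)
Your proof is correct. Part~(1) is argued exactly as the paper does: unique-source-per-target disjointness in each possible world, then average. For parts~(2) and~(3) you take a somewhat different route. The paper verifies submodularity per possible world directly from the definition, by establishing the marginal-gain containment $\Gamma^B_W(T\cup\{u\})\setminus\Gamma^B_W(T)\subseteq\Gamma^B_W(S\cup\{u\})\setminus\Gamma^B_W(S)$ (and the analogue for $\Phi^B_W$): if $v$ is newly credited to $u$ when $u$ is added to $T$, then the winning path from $u$ to $v$ beats every path out of $T\cup A_W$, hence a fortiori every path out of $S\cup A_W\subseteq T\cup A_W$. You instead exhibit the per-world functions as (a constant plus) coverage functions via the hitting-set identities $v\in\Gamma^B_W(S)\iff S\cap M^B_W(v)\neq\emptyset$ and the analogue for $\Phi^B_W$. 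This is correct, and in fact your set $M^B_W(v)$ is exactly the preemptive reverse-reachable set $R^P_W(v)$ that the paper introduces later in Section~\ref{sec:BPIM}; the equivalence you isolate is the per-world content of Lemma~\ref{lem:bpim} (and your $\Phi^B_W$ identity is likewise the per-world content of Lemma~\ref{lem:bim}). So your approach front-loads those algorithmic lemmas and obtains submodularity as a corollary of the coverage form; the paper's direct argument is shorter for Lemma~\ref{lem:properties} in isolation but then has to prove the coverage identities separately anyway.
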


\vspace{-10pt}
\section{Scalable Implementations}\label{sec:RRset}

In this section, we develop scalable algorithms for all three problems PIM, BPIM, and BIM,
based on the reverse influence sampling (RIS) approach~\cite{BorgsBrautbarChayesLucier,tang14,tang15}.

The key concept in RIS is the reverse-reachable set.
A (random) {\em Reverse-Reachable (RR) set $R(v)$ rooted at node $v\in V$}
is  the set of nodes reachable from $v$ by reverse simulating a propagation from $v$.
More precisely, in the SAIC model, 
$R(v)$ is the set of nodes that can reach $v$ in a random possible world
$W=(A_W, \delta_W, L_W, d_W)$ following only live edges in $L_W$.
We use $\rroot(R(v))$ to denote its root $v$.
When we do not specify the root, an {\em RR set} $R$ is one 
rooted at a node picked uniformly at random from $V$.
We will use the notations $R_W(v)$ and $R_W$ when we want to clarify that the RR set is under the
	possible world $W$.

An RR set $R$ has the following intrinsic connection with the influence spread 
$\sigma(S)$ of seed set $S$ in the classical IC
model~\cite{BorgsBrautbarChayesLucier,tang14}: 
\begin{equation} \label{eq:RRsetSpread}
\sigma(S) = n\cdot \E[\I\{ S\cap R \ne \emptyset \}],
\end{equation}
where $\I$ is the indicator function.
RIS approach utilizes this fact to generate enough RR sets to estimate the influence spread and
turn influence maximization into a coverage problem of finding $k$ nodes that 
covers (a.k.a. appears in) the most number of RR sets.
We now need to adapt the RIS approach for our problems considered in the paper.
Our adaptations are patterned over the {\IMM} algorithm~\cite{tang15}, although it would be as easy 
	to adapt other state-of-the-art RIS algorithms too.

\subsection{Algorithm for BIM}\label{sec:BIM}

\begin{algorithm}[t] 
	\caption{{\IMMBIM}: adapted {\IMM} for the BIM problem
	} \label{alg:immbim}
	\KwIn{Graph $G=(V,E)$, propagation probabilities $\{p(u,v)\}_{(u,v)\in E}$,
		self-activation probabilities $\{q(u)\}_{u\in V}$, budget $k$, 
		accuracy parameters $(\varepsilon, \ell)$}
	\KwOut{seed set $S$}
	
	\tcp{Phase 1: Estimate $\theta$, the number of RR sets needed, and generate these RR sets}
	$\newalg{\cR \leftarrow \emptyset}$;  $LB \leftarrow 1$; $\varepsilon' \leftarrow \sqrt{2}\varepsilon$; $\covered\leftarrow 0$ \;
	using binary search to find a $\gamma$ such that
	$\lceil\lambda^*(\ell) \rceil /n^{\ell+\gamma} \le 1/n^\ell$ 
	\tcp{Workaround 2 in~\cite{Chen18}, $\lambda^*(\ell)$ is defined in Eq.~\eqref{eq:lambdastar}}
	$\newalg{\ell \leftarrow \ell + \gamma + \ln 2 / \ln{n}} $\;
	
	\For{$i = 1 \; to \; \log_2{(\newalg{n} - 1)}$ \label{line:bimfor1}}{
		$x_i \leftarrow \newalg{n}/2^i$\; \label{line:bimassignx}
		$\theta_i \leftarrow \lambda'/x_i$; \tcp{$\lambda'$ is defined in Eq.~\eqref{eq:lambdaprime}} \label{line:bimgenRRset1b} 
		\While{$\newalg{|\cR|} + \covered < \theta_i$}{
			Select a node $v$ from $V$ uniformly at random\; \label{line:bimsample1}
			Generate RR set $\newalg{R}$ from $v$\;
			\If{$\exists u\in R$, $u$ is self-activated with probability $q(u)$}{
				$\covered \leftarrow \covered + 1$\; \label{line:selfcount}
			}\Else{
				insert $R$ into $\newalg{\cR}$\;
			}
			\label{line:bimgsample1}
		}\label{line:bimgenRRset1e} 
		\newalg{$S_i \leftarrow \NodeSelection(\cR, k)$}\; \label{line:bimnodeselect1}
		\If{$\newalg{n} \cdot \newalg{F^S_{\cR}(S_i)} \geq (1 + \varepsilon') \cdot x_i$}{
			\tcp{$F^S_{\cR}(S)$ is defined in Eq.~\eqref{eq:fracCover}}
			\label{line:bimestimate1}
			$LB \leftarrow \newalg{n} \cdot F^S_{\cR}(S_i)/(1 + \varepsilon')$\; 
			\label{line:bimestimate2}
			{\bf break}\;  \label{line:endcheck}
		} 
	}
	$\theta \leftarrow \lambda^*(\ell) / LB$; \tcp{$\lambda^*(\ell)$ is defined in Eq.~\eqref{eq:lambdastar}}  \label{line:settheta}
	\While{$\newalg{|\cR|} + \covered \leq \theta$}{ \label{line:bimcheckLB}
		Select a node $v$ from $V$ uniformly at random\; \label{line:bimsample1}
		Generate RR set $\newalg{R}$ from $v$\;
		\If{$\exists u\in R$, $u$ is self-activated with probability $q(u)$}{
			$\covered \leftarrow \covered + 1$\; \label{line:selfcount2}
		}\Else{
			insert $R$ into $\newalg{\cR}$\;
		}
		\label{line:bimnodeselect2}		
	}
	\tcp{Phase 2: select seed nodes from the generated RR sets}
	\newalg{$S \leftarrow \NodeSelection(\cR, k)$}\;\label{line:bimgsample2}	
	\bf{return} $S$.
\end{algorithm}

%

We first present the adaption of {\IMM} to the BIM problem, since BIM is close to the
	original influence maximization problem.
The boosted influence spread $\sigma^B$ has the following connection with a random RR set $R$:

\begin{restatable}{lemma}{bim} \label{lem:bim}
	For any seed set $S$,
	\begin{equation}
	\sigma^B(S)
	= n  \cdot \E_{W\sim \cW(q,\Delta,p,D)}[\I\{ (S \cup A_W)\cap R_W \ne \emptyset \}]. \label{eq:rrbim}
	\end{equation}
\end{restatable}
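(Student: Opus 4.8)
The plan is to reduce the claim to the standard reverse-influence-sampling identity of \citet{BorgsBrautbarChayesLucier,tang14}, which I would establish first pointwise in each possible world and then lift to an expectation. First I would fix a possible world $W = (A_W,\delta_W,L_W,d_W)$ and a seed set $S$, and show the set identity
\[ \Phi^B_W(S) = \{\, v \in V : (S \cup A_W) \cap R_W(v) \ne \emptyset \,\}. \]
The key observation, already noted when $\sigma^B$ was defined, is that the delay vectors $\delta_W$ and $d_W$ only influence \emph{when} nodes get activated, not \emph{which} nodes ultimately get activated. With seed set $S$ the sources of propagation are exactly $S \cup A_W$ --- every seed in $S$ has its self-activation probability boosted to $1$ and hence is activated, and every node of $A_W$ is self-activated --- and a node $v$ is eventually activated in $W$ if and only if some source reaches $v$ along live edges of $L_W$, i.e.\ if and only if $R_W(v)$ (the set of nodes that reach $v$ via live edges in $L_W$) intersects $S \cup A_W$. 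The edge cases $v \in S$ and $v \in A_W$ cause no trouble since $v \in R_W(v)$ always.

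Next I would take cardinalities in the set identity and use linearity of expectation over $W \sim \cW(q,\Delta,p,D)$:
\[ \sigma^B(S) = \E_W\big[\,|\Phi^B_W(S)|\,\big] = \sum_{v\in V} \E_W\big[\,\I\{(S\cup A_W)\cap R_W(v)\ne\emptyset\}\,\big]. \]
To finish, recall that a random RR set $R_W$ with unspecified root is generated by first sampling the possible world $W$ and then sampling $\rroot(R_W)$ uniformly from $V$, independently of $W$; therefore the sum on the right equals $n \cdot \E_W[\,\I\{(S\cup A_W)\cap R_W \ne \emptyset\}\,]$, which is exactly Equation~\eqref{eq:rrbim}.

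I do not expect a genuine obstacle here; the only place that calls for care is the set identity of the first step --- in particular the assertion that reachability via live edges together with the source set $S \cup A_W$ characterizes $\Phi^B_W(S)$ regardless of the delay vectors --- together with the point that the root of a random RR set is drawn independently of the possible world, so the two sources of randomness factor cleanly. The rest is counting and linearity of expectation, exactly as in the classical IC-model derivation of Equation~\eqref{eq:RRsetSpread}.
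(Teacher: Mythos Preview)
Your proposal is correct and follows essentially the same approach as the paper: the paper conditions on the uniformly random root $v$, invokes the equivalence $\{(S\cup A_W)\cap R_W(v)\ne\emptyset\}=\{v\in\Phi^B_W(S)\}$ (your set identity), and then sums over $v$ and averages over $W$. The only difference is cosmetic ordering---you argue from $\sigma^B(S)$ toward the RR-set expectation, whereas the paper computes in the reverse direction---but the content is identical.
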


Eq.~\eqref{eq:rrbim} enables the RIS approach as for the classical influence maximization.
We adapt the {\IMM} algorithm of~\cite{tang15} to get the {\IMMBIM} algorithm, as given in Algorithm~\ref{alg:immbim}.
The two main parameters $\lambda'$ and $\lambda^*(\ell)$ used in the algorithm are given below:
\begin{align}
& \lambda' \leftarrow [(2+\frac{2}{3}\varepsilon') \cdot (\ln{{\binom{n}{k}}}+\ell \cdot \ln{n}+\ln{\log_2{n}})\cdot n]/\varepsilon'^2 \label{eq:lambdaprime} \\
& \lambda^*(\ell) \leftarrow 2 n \cdot ((1-1/e) \cdot \alpha + \beta)^2 \cdot \varepsilon^{-2} 
	\label{eq:lambdastar} \\
& 	\alpha \leftarrow \sqrt{\ell \ln{n} + \ln{2}};
\beta \leftarrow \sqrt{(1-1/e)\cdot (\ln{\binom{n}{k}}+\alpha^2)} \nonumber
\end{align}

The algorithm contains two phases. 
In Phase 1, we generate $\theta$ RR sets $\cR$, where $\theta$ is computed to guarantee the approximation
	with high probability.
In Phase 2, we use the greedy algorithm to find $k$ seed nodes that
	cover as many RR sets in $\cR$ as possible: in each iteration, we find one seed node that covers
	the most number of remaining RR sets not covered by previously selected seed nodes.
The $\NodeSelection$ procedure of Phase 2 implements the above greedy algorithm, and is exactly the same as in~\cite{tang15},
	and thus we omit it here.
	
Phase 1 follows the {\IMM} structure: it uses the for-loop to estimate a lower bound of $\OPT$, the optimal solution
	to the BIM problem, by repeatedly halving the estimate $x_i$ and checking if the estimate is valid.
The validity check is by running the greedy $\NodeSelection$ procedure (line~\ref{line:bimnodeselect1})
	to find a seed set $S_i$ and getting its influence spread estimate, since the greedy algorithm should give
	a constant approximation of $\OPT$.
{\IMMBIM} differs from {\IMM} because it needs to incorporate self-activation probabilities $q(u)$'s. 
In particular, when we generate an RR set $R$, for each node $u \in R$, we sample a random coin with bias $q(u)$
	to see if $u$ is self-activated.
If so, it means this RR set $R$ has already be covered by the self-activated $u$, and there is no need to select an
	extra seed node to cover $R$.
In this case, we do not need to store $R$, but only need to count its number in variable $\covered$, which records
	the number of RR sets covered by self-activated nodes.
Only an RR set $R$ that contains no self-activated nodes needs to be stored in $\cR$ for
	later greedy seed selection (line~\ref{line:selfcount}).
The variable $\covered$ is used to estimate boosted influence spread $\sigma^B(S)$.
In particular, by Eq.~\eqref{eq:rrbim}, $\sigma^B(S)$ can be estimated as $n$ times the fraction of 
	RR sets that are covered either by $S$ or by self-activated nodes.
This fraction is defined as $F^S_{\cR}(S)$:
\begin{equation} \label{eq:fracCover}
F^S_{\cR}(S) = \frac{\covered + \sum_{R\in \cR} \I\{S\cap R \ne \emptyset \}}{\covered + |\cR|}.
\end{equation}
By an analysis similar to that of the \alg{IMM} algorithm~\cite{tang15}, we have
	the following theorem.
Let $\tilde{v}$ be a random node selected from $V$ with probability proportional to its indegree, and
let $\sigma(\tilde{v})$ denote the influence spread of $\tilde{v}$ in the corresponding IC model.

\begin{restatable}{theorem}{bimw} \label{thm:bimw}
	Let $S^*$ be the optimal solution of the BIM problem.
	For every $\varepsilon > 0$ and $\ell > 0$, with probability at least
	$1-\frac{1}{n^\ell}$, the output $S^o$ of {\IMMBIM}
	satisfies
	\begin{equation*}
		\sigma^B(S^o) \geq \left(1- \frac{1}{e} - \varepsilon\right) \sigma^B(S^*),
	\end{equation*}
	In this case, the expected running time for {\IMMBIM}
	is $O((k + \ell)(n + m)\log{n}/\varepsilon^2\cdot (\E[\sigma(\tilde{v})]/ \sigma^B(S^*))) = O((k + \ell)(n + m)\log{n}/\varepsilon^2)$.
\end{restatable}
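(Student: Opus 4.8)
The plan is to follow the standard \IMM\ analysis of~\cite{tang15} and show that each of its two main ingredients---(i) the lower-bound estimation loop in Phase 1 produces a valid lower bound $LB \le \OPT$ with high probability, and (ii) the final sample size $\theta = \lambda^*(\ell)/LB$ is large enough that the empirical coverage $F^S_\cR(\cdot)$ uniformly concentrates around $\sigma^B(\cdot)/n$---goes through essentially verbatim once we replace the classical RR-set coverage identity with Eq.~\eqref{eq:rrbim} from Lemma~\ref{lem:bim}. The key observation making this work is that the quantity $n\cdot F^S_\cR(S)$ is an unbiased estimator of $\sigma^B(S)$: each of the $\covered + |\cR|$ samples is an i.i.d.\ draw of a random RR set $R_W$ together with the self-activation set $A_W$, and by Eq.~\eqref{eq:rrbim} the indicator $\I\{(S\cup A_W)\cap R_W \ne \emptyset\}$ has expectation $\sigma^B(S)/n$. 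A sample contributes to $\covered$ exactly when $A_W\cap R_W \ne \emptyset$ (so the indicator is $1$ regardless of $S$), and otherwise it is stored in $\cR$ and contributes $\I\{S\cap R_W \ne\emptyset\}$; hence $\covered + \sum_{R\in\cR}\I\{S\cap R\ne\emptyset\}$ is exactly the count of successful indicators among all $\covered+|\cR|$ samples, which is what Eq.~\eqref{eq:fracCover} normalizes.

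The steps, in order, would be: (1) Restate the martingale/concentration machinery of~\cite{tang15} for the sum of i.i.d.\ (more precisely, martingale-difference) Bernoulli variables $\I\{(S\cup A_W)\cap R_W \ne\emptyset\}$, noting that the stopping condition ``$|\cR| + \covered \ge \theta_i$'' in the inner while-loop is a stopping time on the number of generated samples, so the martingale concentration bounds (Chernoff-type and the stopping-time variants) apply unchanged. (2) Show that $\lambda'$ defined in Eq.~\eqref{eq:lambdaprime} is chosen so that, whenever the check on line~\ref{line:bimestimate1} passes for some $x_i$, the resulting $LB = n\cdot F^S_\cR(S_i)/(1+\varepsilon')$ satisfies $LB \le \sigma^B(S^*) = \OPT$ with the required failure probability, and that the loop is guaranteed to trigger before $x_i$ drops below $\OPT/2$ (using that greedy \NodeSelection\ on the RR samples achieves a $(1-1/e)$ fraction of the best achievable empirical coverage, and that $\OPT \ge$ the max single-node boosted spread $\ge 1$, so $\log_2(n-1)$ iterations suffice). (3) Show that with $\theta = \lambda^*(\ell)/LB \ge \lambda^*(\ell)/\OPT$ and $\lambda^*(\ell)$ as in Eq.~\eqref{eq:lambdastar}, a union bound over the $\binom{n}{k}$ candidate seed sets gives: with probability $\ge 1 - 1/n^\ell$ simultaneously, $n\cdot F^S_\cR(S^*) \ge (1-\alpha\sqrt{\cdots})\OPT$ and $n\cdot F^S_\cR(S^o) \le \sigma^B(S^o) + \beta\sqrt{\cdots}\OPT$; combining with the greedy guarantee $F^S_\cR(S^o) \ge (1-1/e)F^S_\cR(S^*)$ yields $\sigma^B(S^o) \ge (1-1/e-\varepsilon)\OPT$. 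The constant $\gamma$-workaround at the top (line `$\ell \leftarrow \ell+\gamma+\ln 2/\ln n$') just absorbs the extra union-bound factors from the $\log_2 n$ loop iterations and the two events, exactly as in~\cite{Chen18}; I would cite that fix rather than rederive it. (4) For the running time, bound the expected cost of generating one RR set. A random RR set rooted at a uniformly random node has expected size proportional to $\E[\sigma(\tilde v)]$ where $\tilde v$ is indegree-weighted (by the standard reverse-BFS accounting, summing edge-traversal cost equals expected forward influence of an indegree-proportional node), so $\theta$ RR sets cost $O(\theta\cdot \E[\sigma(\tilde v)]\cdot (m/n + 1))$ in expectation; substituting $\theta = O((k+\ell)n\log n/(\varepsilon^2\OPT))$ and $\E[\sigma(\tilde v)] \le \sigma^B(S^*)$ (since boosting a single most-influential-looking node is no worse, and $\E[\sigma(\tilde v)]$ is dominated by $\OPT$) gives the claimed $O((k+\ell)(n+m)\log n/\varepsilon^2)$ after the telescoping cancellation.

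The main obstacle I expect is step (2): the subtlety that \IMMBIM\ only stores RR sets with \emph{no} self-activated node, so the greedy \NodeSelection\ operates on $\cR$ (the ``reduced'' sample) while the estimator $F^S_\cR$ reintroduces the $\covered$ count. One must verify that greedy on $\cR$ still gives a $(1-1/e)$-approximation with respect to $F^S_\cR$ and not merely with respect to $\sum_{R\in\cR}\I\{S\cap R\ne\emptyset\}/|\cR|$. This holds because adding the fixed constant $\covered$ to both numerator and a constant normalizer $\covered+|\cR|$ is an affine, monotone transformation that preserves the submodular-greedy guarantee on the marginal gains (the $\covered$ term contributes zero marginal gain for any seed, being already ``covered''), but this needs to be stated carefully to avoid an off-by-a-constant error in the approximation ratio. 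A secondary technical point is confirming that the stopping time ``$|\cR|+\covered$ reaches $\theta_i$'' rather than ``number of samples reaches $\theta_i$'' does not break the martingale bounds---but since $|\cR|+\covered$ equals the number of samples generated so far, this is an identity, not a complication. The rest is bookkeeping that mirrors~\cite{tang15} line for line.
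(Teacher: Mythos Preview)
Your proposal is correct and follows essentially the same approach as the paper's own proof, which is itself a brief sketch: reduce to the \IMM\ analysis of~\cite{tang15} via the unbiased-estimator identity from Lemma~\ref{lem:bim}, observe that $\covered$ tracks the self-activated RR sets so that $n\cdot F^S_\cR(S)$ estimates $\sigma^B(S)$, note that greedy \NodeSelection\ on the reduced collection $\cR$ suffices because the $\covered$ term is a constant offset, and invoke monotone submodularity of $\sigma^B$ (Lemma~\ref{lem:properties}) plus the workaround of~\cite{Chen18}. Your write-up is in fact more detailed than the paper's sketch---in particular your explicit check that the affine shift by $\covered$ preserves the $(1-1/e)$ greedy guarantee and your verification that $|\cR|+\covered$ equals the raw sample count (so the stopping rule is trivial) are points the paper leaves implicit.
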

Similar to {\IMM}, the above theorem shows that {\IMMBIM} achieves $1-1/e-\varepsilon$
	approximation with near-linear running time.
The theorem explicitly shows the ratio $\E[\sigma(\tilde{v})]/ \sigma^B(S^*)$, 
	which is less than $1$, 
	in order to compare later with other algorithms.


\vspace{-5pt}
\subsection{Algorithm for BPIM}\label{sec:BPIM}

We next discuss our implementation of BPIM, which we call {\IMMBPIM}.
Since the objective function $\rho^B(S)$ is monotone submodular (Lemma~\ref{lem:properties}), 
	{\IMMBPIM} follows the general structure of greedy seed selection.
However, it differs from {\IMM} and {\IMMBIM} significantly in its RR set definition and generation process.
Intuitively, the preemptive influence spread of a seed node $u$ only counts the activated nodes that
	$u$ reaches first before any other seed nodes or self-activated nodes.
In terms of RR sets, this means that a node $u$ can be included in the preemptive RR set only if
	the total delay from $u$ to the root $v$ is smaller than the minimum total delay from any
	self-activated nodes to $v$.
We formally define the {\em preemptive reverse-reachable (P-RR) set as follows}.
Given a possible world $W=(A_W, \delta_W, L_W, d_W)$ in the SAIC model, a P-RR set $R^P_W(v)$ rooted at $v$ is the
	set of nodes $u$ such that (1) $u$ could reach $v$ through live edges in $L_W$, and
	(2) the total delay of $u$ to $v$, $T_W(u,v)$, is less than or equal to the minimum total delay from 
	any self-activated node in $A_W$ to $v$.
When we do not specify the root $v$, P-RR set $R^P_W$ is a P-RR set with root selected uniformly
	at random among all nodes in $V$.
The subscript $W$ could be omitted when the context is clear.
With this definition, we can obtain the following connection between a P-RR set and the preemptive
	influence spread $\rho^B(S)$.

\begin{restatable}{lemma}{bpim} \label{lem:bpim}
	For any seed set $S$,
	\begin{equation}
	\rho^B(S)
	= n  \cdot \E_{W\sim \cW(q,\Delta,p,D)}[\I\{ S\cap R^P_W \ne \emptyset \}]. \label{eq:rrbpim}
	\end{equation}
\end{restatable}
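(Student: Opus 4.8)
The plan is to establish \eqref{eq:rrbpim} by the usual reverse-reachability bookkeeping: decompose $\rho^B(S)$ into a sum over potential targets, and then match, within each fixed possible world, the event that a target is first reached from $S$ with the event that $S$ intersects the P-RR set rooted at that target.

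First I would write $|\Gamma^B_W(S)|=\sum_{v\in V}\I\{v\in\Gamma^B_W(S)\}$ and take expectations over $W$, obtaining $\rho^B(S)=\sum_{v\in V}\Pr_W[v\in\Gamma^B_W(S)]$. On the other side, since the root of $R^P_W$ is drawn uniformly from $V$ independently of the rest of the randomness, $\E[\I\{S\cap R^P_W\ne\emptyset\}]=\frac1n\sum_{v\in V}\Pr_W[S\cap R^P_W(v)\ne\emptyset]$. Hence it suffices to show, for each fixed $v$, that the events $\{v\in\Gamma^B_W(S)\}$ and $\{S\cap R^P_W(v)\ne\emptyset\}$ coincide for almost every possible world $W$; summing over $v$ and multiplying by $n$ then yields the lemma.

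For the world-wise matching, fix $W=(A_W,\delta_W,L_W,d_W)$ outside the measure-zero set (guaranteed by continuity of the delay distributions $\Delta,D$, as already assumed) on which two distinct live paths into $v$ share the same total delay. Put $\mu_S(v)=\min\{T_W(P):P\in\cP_W(u,v),\,u\in S\}$, $\mu_A(v)=\min\{T_W(P):P\in\cP_W(u',v),\,u'\in A_W\setminus S\}$, and $\tau^*(v)=\min_{u'\in A_W}T_W(u',v)$, with $\min\emptyset=+\infty$. Unfolding the definition of $\Gamma^B_W$ shows $v\in\Gamma^B_W(S)$ iff $\mu_S(v)<+\infty$ and $\mu_S(v)<\mu_A(v)$; unfolding the P-RR set definition shows $S\cap R^P_W(v)\ne\emptyset$ iff $\mu_S(v)<+\infty$ and $\mu_S(v)\le\tau^*(v)$. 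Since $A_W\cap S\subseteq S$ we have $\tau^*(v)=\min\bigl(\mu_A(v),\,\min_{u'\in A_W\cap S}T_W(u',v)\bigr)$ with the second term at least $\mu_S(v)$, so $\mu_S(v)\le\tau^*(v)\iff\mu_S(v)\le\mu_A(v)$; and on the chosen full-measure set of worlds, if $\mu_S(v)$ is finite then $\mu_S(v)=\mu_A(v)$ cannot occur, so $\le$ and $<$ are interchangeable there (while if $\mu_S(v)=+\infty$ both events fail). This gives the equivalence, hence the claim.

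The only part that is more than routine is this world-wise matching, and within it the single real subtlety is reconciling the two mismatched conventions in the definitions: the P-RR set compares against a minimum over \emph{all} of $A_W$ using $\le$, whereas $\Gamma^B_W(S)$ compares against $A_W\setminus S$ using strict $<$. The argument above resolves both points --- the first via the inclusion $A_W\cap S\subseteq S$, the second via the continuity/distinctness assumption already present in the model --- and everything else is linearity of expectation together with the independence of the random root from the rest of the possible world.
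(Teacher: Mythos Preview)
Your proof is correct and follows the same route as the paper: condition on the random root, use linearity of expectation, and reduce to the world-wise equivalence $\{v\in\Gamma^B_W(S)\}=\{S\cap R^P_W(v)\ne\emptyset\}$. The paper simply asserts this equivalence ``by the definitions,'' whereas you carefully reconcile the $\le$ versus $<$ and the $A_W$ versus $A_W\setminus S$ mismatch; that extra care is welcome but the argument is otherwise identical.
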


\begin{algorithm}[t] 
	\caption{{\PRR}: Preemptive RR Set Generation} \label{alg:pwrr}
	\KwIn{root $v^r$, Graph $G=(V,E)$, self-activation probability $p$, random distribution of self-activation delay $\Delta$, propagation probability $p$, random distribution of propagation delay $D$}
	\KwOut{P-RR set $R^P$, node $u^s$ that is the first activating $v^r$}
	
	$Q \leftarrow \{v^r\}$; $R^P \leftarrow \emptyset$; $u^s \leftarrow -1$ \;
	\For{each $v \in V$}{
		$\delay[v] \leftarrow +\infty$;\tcp{initial delays for reaching the root}
	}
	$\delay[v^r] \leftarrow 0$\;
	\While{$Q \neq \emptyset$}{
		$w \leftarrow \argmin_{w'\in Q} \delay[w']$\; \label{line:selectmin}
		delete $w$ from $Q$\;
		\If{$w$ is a shadow node $\shadow{v}$}{
			insert $v$ into $R^P$\; \label{line:insertRP}
			\If{$v$ is self-activated with probability $q(v)$ \label{line:foundminnodeb}}{
				$u^s \leftarrow v$\;
				{\bf break}\; \label{line:foundminnodee}
			}
		}\Else{
			\tcp{let $w$ be the real node $v$}
			sample $\delta(v) \sim \Delta(v)$\; \label{line:samplesadelay}
			$\delay[\shadow{v}] \leftarrow \delay[v] + \delta(v)$\;
			insert $\shadow{v}$ into $Q$\; \label{line:insertshadow}
			\For{each real in-neighbor $u$ of $v$ in $G$ \label{line:reversesimu}}{
				\If{$(u,v)$ is sampled as live with probability $p(u,v)$}{
					sample $d(u,v) \sim D(u,v)$\;
					$tmp \leftarrow \delay[v] + d(u,v)$\;
					\If{ $\delay[u] = +\infty$}{
						insert $u$ into $Q$\;
					}
					\If{$tmp < \delay[u]$ \label{line:trimexplore}
					}{
						$\delay[u] \leftarrow tmp$ \label{line:updatepd}\; \label{line:updatedelay}
					}
				}
			}
		}
	}
	\bf{return} $R^P$, $u^s$.
\end{algorithm}

With Eq.~\eqref{eq:rrbpim}, we can see that as long as we can properly generate P-RR sets, 
	we can use the {\IMM} algorithm in the same way to find the seed sets.
Therefore, our main focus now is to efficiently generate a random P-RR set.
This algorithm is implemented as {\PRR} as given in Algorithm~\ref{alg:pwrr}.
The main idea is that for each node $v\in V$, we add its shadow node $\shadow{v}$ and an edge from $\shadow{v}$ to $v$, 
	and consider the delay on the edge $(\shadow{v},v)$ as a sample of the self-activation delay $\delta(v)\sim \Delta(v)$.
Then the delay from any node (real or shadow) to the root $v^r$ is the minimum delay along any path from the node to $v^r$.
Let $u^s$ be a node that is self-activated and its corresponding shadow node $\shadow{v}^s$ has the minimum delay to the root $v^r$ among
	all shadow nodes.
Then the P-RR set $R^P$ is the set of nodes whose shadow nodes have delay less than or equal to the delay of $\shadow{v}^s$.

To find $u^s$ and $R^P$, we apply the idea from the Dijkstra's shortest path algorithm: from the candidate nodes we touched so far
	(set $Q$), we take
	node $w$ that has the shortest delay as the next one to explore (line~\ref{line:selectmin}). 
If $w$ is a shadow node $\shadow{v}$, we insert $v$ into $R^P$ (line~\ref{line:insertRP}), then 
	test if $v$ is self-activated or not, and if so, we find $u^s = v$, and $R^P$ contains
	all the shadow nodes we have explored so far and the algorithm stops (lines~\ref{line:foundminnodeb}--\ref{line:foundminnodee}).
If $w$ is a real node $v$, we first sample the delay $\delta(v)\sim \Delta(v)$ as the delay from $v$'s shadow $\shadow{v}$ to $v$, 
	compute
	the delay of $\shadow{v}$ as $\delay[v] + \delta(v)$, and insert $\shadow{v}$ into the candidate node set $Q$
	(lines~\ref{line:samplesadelay}--\ref{line:insertshadow}).
We then do reverse simulation along all of $v$'s incoming edges $(u,v)$, and sample the edge delay $d(u,v)\sim D(u,v)$, and do proper
	updates of $\delay[u]$ (lines~\ref{line:reversesimu}--\ref{line:updatedelay}).
The algorithm guarantees that the node sequence we explore has increasing delays, which in turn guarantees the correctness of
	$R^P$ and $u^s$ found by the algorithm.
We remark that $u^s$ would be useful in solving PIM, as to be explained in the next subsection.

With the P-RR set generation algorithm {\PRR}, we just plug it into the {\IMM} algorithm and obtain 
	{\IMMBPIM}.
The full pseudocode is omitted.
We have the theorem below for the {\IMMBPIM} algorithm.

\begin{restatable}{theorem}{thmIMMBPIM} \label{thm:IMMBPIM}
	Let $S^*$ be the optimal solution of the BPIM.
	For every $\varepsilon > 0$ and $\ell > 0$, with probability at least
	$1-\frac{1}{n^\ell}$, the output $S^o$ of the {\IMMBPIM} algorithm 
	satisfies
	\begin{equation*}
	\rho^B(S^o) \geq \left(1- \frac{1}{e} - \varepsilon\right) \rho^B(S^*).
	\end{equation*}
	In this case, 
	the expected running time of the {\IMMBPIM} algorithm 
	is $O((k + \ell)(n + m)\log^2{n}/\varepsilon^2\cdot (\E[\sigma(\tilde{v})]/ \rho^B(S^*)))$.
\end{restatable}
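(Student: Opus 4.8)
The plan is to obtain Theorem~\ref{thm:IMMBPIM} by reducing the analysis of {\IMMBPIM} to that of the original {\IMM} algorithm~\cite{tang15} (with the correction of~\cite{Chen18}), the only genuinely new ingredients being the correctness and the cost of the P-RR generator {\PRR}. I would first note that {\IMMBPIM} is literally {\IMM} with RR sets replaced by P-RR sets and with no $\covered$ bookkeeping: since $\rho^B(\emptyset)=0$ and a root $v^r$ is claimed by a seed set $S$ precisely when $S\cap R^P_W(v^r)\neq\emptyset$, self-activated non-seed nodes only shrink the P-RR set rather than ``pre-covering'' it (unlike the BIM case). Lemma~\ref{lem:bpim} thus supplies the clean coverage identity $\rho^B(S)=n\cdot\Pr[S\cap R^P_W\neq\emptyset]$ with $\rho^B$ in the role of the influence spread; hence $n$ times the empirical coverage $F_{\cR}(S)=\frac{1}{|\cR|}\sum_{R\in\cR}\I\{S\cap R\neq\emptyset\}$ is an unbiased estimator of $\rho^B(S)$, and by Lemma~\ref{lem:properties}(2) the greedy $\NodeSelection$ returns a $(1-1/e)$-optimal maximizer of $F_{\cR}$. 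Every P-RR set is nonempty (it always contains the real node whose shadow has the smallest total delay to the root) and of size at most $n$, so the martingale/Chernoff concentration arguments, the parameter choices $\lambda'$ and $\lambda^*(\ell)$ of Eqs.~\eqref{eq:lambdaprime}--\eqref{eq:lambdastar}, the lower-bound estimation loop, and the final $\theta=\lambda^*(\ell)/LB$ all carry over verbatim from~\cite{tang15,Chen18}, yielding $\rho^B(S^o)\ge(1-1/e-\varepsilon)\rho^B(S^*)$ with probability at least $1-1/n^{\ell}$.

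The substantive step is to show that {\PRR} outputs a correctly distributed random P-RR set. By the principle of deferred decisions, the lazy sampling inside {\PRR} is equivalent to fixing a possible world $W=(A_W,\delta_W,L_W,d_W)$ up front: each edge $(u,v)$ has its liveness and delay touched at most once (when the real node $v$ is extracted), and each $\delta(v)$ and self-activation coin $q(v)$ at most once (when $v$, resp.\ $\shadow v$, is extracted), so the draws are mutually independent with the prescribed laws and the untouched coordinates of $W$ cannot influence the output. Next, treating each shadow node $\shadow v$ as joined to $v$ by an edge of weight $\delta_W(v)$, {\PRR} is exactly reverse Dijkstra from $v^r$ on the live-edge graph with these nonnegative weights, stopped the first time the shadow node of a self-activated vertex is extracted. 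The standard Dijkstra invariant---vertices are extracted in nondecreasing order of their true minimum total delay to $v^r$, with $\delay[\cdot]$ exact on extraction---then gives $\delay[\shadow v]=T_W(v,v^r)$, identifies $u^s$ as the self-activated node minimizing $T_W(\cdot,v^r)$, and yields $R^P=\{v:T_W(v,v^r)\le T_W(u^s,v^r)\}$, which by the no-ties assumption on the continuous $\Delta,D$ equals $\{v:T_W(v,v^r)\le\min_{u'\in A_W}T_W(u',v^r)\}$ almost surely---the definition of $R^P_W(v^r)$. Drawing the root uniformly then produces a random P-RR set.

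For the running time I would follow the {\IMM} accounting: {\IMMBPIM} generates $\theta=O\!\big(n(k+\ell)\log n/(\varepsilon^2\rho^B(S^*))\big)$ P-RR sets, and the remaining work (the $\NodeSelection$ calls, maintaining $\cR$, etc.) is of the same order. One call to {\PRR} costs $O(\log n)$ per priority-queue operation times the number of edges examined plus vertices extracted; this heap overhead, absent in the BFS-style generation of ordinary RR sets, is what raises the final bound to $\log^2 n$. To bound the expected number of edges examined by a random P-RR set, observe that {\PRR} examines only the in-edges of real vertices it extracts, and a real vertex is extracted only if it reaches $v^r$ through live edges in $W$; hence the edge count never exceeds the ``width'' of the full reverse-reachable set of $v^r$, whose expectation over a uniformly random root is $(m/n)\,\E[\sigma(\tilde v)]$ by the standard width lemma~\cite{tang14,tang15}. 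Multiplying $\theta$ by the per-set cost gives the claimed $O\!\big((k+\ell)(n+m)\log^2 n/\varepsilon^2\cdot\E[\sigma(\tilde v)]/\rho^B(S^*)\big)$.

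I expect the main obstacle to be the correctness of {\PRR}: making the shadow-node reduction and the deferred-decisions argument precise enough that the \emph{truncated} reverse-Dijkstra output is shown to be distributed exactly as $R^P_W$, and---for the time bound---verifying that the early termination can only decrease the number of edges examined, so that the reverse-reachable width lemma still applies as an upper bound. Everything downstream of Lemma~\ref{lem:bpim} is essentially a transcription of the {\IMM} analysis.
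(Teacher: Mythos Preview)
Your proposal is correct and follows essentially the same approach as the paper's (sketched) proof: reduce the approximation guarantee to the {\IMM} analysis via Lemma~\ref{lem:bpim} and the monotone submodularity of $\rho^B$, and obtain the extra $\log n$ in the running time from the priority-queue overhead in {\PRR}, bounding the expected edge work by the classical width lemma $EPT=(m/n)\E[\sigma(\tilde v)]$. Your write-up is in fact more complete than the paper's, which does not spell out the deferred-decisions/Dijkstra-invariant argument for the correctness of {\PRR} nor the observation that early termination only shrinks the traversed region, so your identification of these as the ``substantive step'' and ``main obstacle'' is apt.
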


Note that for the ratio $\E[\sigma(\tilde{v})]/ \rho^B(S^*)$, one would expect that typically the optimal solution of BPIM would
	be larger than any single node influence spread, and thus the ratio is less than $1$ and we have a near-linear time algorithm.
In this case, the expected running time for {\IMMBPIM} still has one extra $\log n$ term comparing to that of {\IMMBIM} or {\IMM}.
This is because our reverse simulation algorithm {\PRR} needs to run a Dijkstra-like algorithm, and in particular
	we need to implement the set $Q$ in Algorithm~\ref{alg:pwrr} as a priority queue to 
	support insertion, deletion, updates, and finding the minimum
	value.

\begin{algorithm}[t] 
	\caption{{\IMMPIM}: Preemptive IMM Algorithm} \label{alg:immpim}
	\KwIn{Graph $G=(V,E)$, self-activation probabilities $q$, self-activation delay distributions $\Delta$, propagation probabilities $p$, 
		propagation delay distributions $D$, budget $k$, accuracy parameters $(\varepsilon, \ell)$
	}
	\KwOut{set $S$}
	
	\tcp{Phase 1: Estimate $\theta$, the number of P-RR sets needed, and generate these P-RR sets}
	$\newalg{\cR \leftarrow \emptyset}$;  $LB \leftarrow 1$; $\varepsilon' \leftarrow \sqrt{2}\varepsilon$\;
	using binary search to find a $\gamma$ such that
	$\lceil\tilde{\lambda}^*(\ell) \rceil /n^{\ell+\gamma} \le 1/n^\ell$ 
	\tcp{Workaround 2 in~\cite{Chen18}, $\tilde{\lambda}^*(\ell)$ is defined in Eq.~\eqref{eq:tildelambdastar}}
	$\newalg{\ell \leftarrow \ell + \gamma + \ln 2 / \ln{n}} $\;

	$est_u \leftarrow 0$ for every $u \in V$\;
	\For{$i = 1$ to $\lfloor\log_2{n}\rfloor - 1$}{
		$x_i \leftarrow \newalg{n}/2^i$\; \label{line:assignx}
		$\theta_i \leftarrow \lambda'/x_i$; \tcp{$\lambda'$ is defined in Eq.~\eqref{eq:lambdaprime}}
		\While{$\newalg{|\cR|} \leq \theta_i$}{
			Select a node $v$ from $V$ uniformly at random\;\label{line:pimsample1}
			$(-, u) \leftarrow \PRR(v,G,q,\Delta,p,D)$;
			\tcp{generate a random P-RR set pair $(R^P,u)$, need the
				returned node $u$ that is both self-activated and the earliest in reaching root $v$,
				but ignore the set $R^P$} \label{line:pimgenRRset1a}
			\If{$u \ne -1$}{
				$est_{u} \leftarrow est_{u} + 1$\;
				\label{line:pimupdate1}
			}
		}
		$\topk \leftarrow $ sum of the top $k$ largest values in $\{est_u\}_{u \in V}$\;
		\If{$n \cdot \topk \geq (1 + \epsilon') \cdot x_i$}{
			\label{line:pimestimate1}
			$LB \leftarrow \newalg{ n \cdot \topk/(\theta_i \cdot (1 + \epsilon'))}$\; 
			\label{line:pimestimate2}
			{\bf break}\;
		} 
	}
	$\theta \leftarrow \tilde{\lambda}^* / LB$; \tcp{$\tilde{\lambda}^*$ is defined in Eq.~\eqref{eq:tildelambdastar}}
	\While{$\newalg{|\cR|} \leq \theta$}{
		Select a node $v$ from $V$ uniformly at random\;\label{line:pimsample2}
		$(-, u) \leftarrow \PRR(v,G,q,\Delta,p,D)$\;\label{line:pimgenRRset1b}
		\If{$u \ne -1$}{
			$est_{u} \leftarrow est_{u} + 1$\;
			\label{line:pimupdate2}
		}
	}
	\tcp{Phase 2: obtain the top $k$ nodes}
	$S \leftarrow$ set of top $k$ nodes with the largest values in $\{est_u\}_{u \in V}$\;
	\label{line:pimoutput}
	\bf{return} $S$.
\end{algorithm}

\vspace{-5pt}
\subsection{Algorithm for PIM}\label{sec:PIM}

Finally, we consider the preemptive influence maximization (PIM) algorithm.
PIM differs from BIM and BPIM in that we do not select seeds and boost their self-activation probabilities to $1$.
We only select $k$ nodes who {\em spontaneously} have the largest preemptive influence spread, due to self activations.
By Lemma~\ref{lem:properties}, we know that the preemptive influence spread $\rho$ is additive, which implies that
	we just need to estimate individual node's preemptive influence spread and select the top $k$ of them.
We still use the RIS approach for estimating individual node's preemptive influence spread, based on the following result.
Let $u^s_W(v)$ be the node in the possible world $W$ that is self-activated and can reach $v$ with the minimum total delay $T(u,v)$, and
	$u^s_W$ denotes such a random $u^s_W(v)$ where $v$ is selected uniformly at random.

\begin{restatable}{lemma}{lempim} \label{lem:pim}
	For any node $u$,
	\begin{equation}
	\rho(\{u\}) = n \cdot \E_{W\sim \cW(q,\Delta,p,D)}[\I\{ u = u^s_W\}]. \label{eq:rrpim}
	\end{equation}
\end{restatable}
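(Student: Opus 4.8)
The plan is to expand the definition of $\rho(\{u\})$ directly in the possible-world model and match it term by term with the right-hand side, in the same spirit as the derivation of the classical RR-set identity \eqref{eq:RRsetSpread}, but additionally keeping track of \emph{which} self-activated node ``wins'' each target node. First I would fix a possible world $W = (A_W,\delta_W,L_W,d_W)$ and apply the definition of $\Gamma_W(\cdot)$ with $A = \{u\}$: a node $v$ lies in $\Gamma_W(\{u\})$ precisely when $u\in A_W$, some live path from $u$ reaches $v$ with finite total delay, and the minimum such total delay $T_W(u,v)$ is strictly smaller than $T_W(u',v)$ for every other self-activated node $u'\in A_W\setminus\{u\}$. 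By the continuity assumption on the delay distributions $\Delta$ and $D$, the worlds in which two distinct live paths have equal total delay form a set of measure zero, so a.s.\ there is a unique self-activated node attaining the minimum total delay to each activated $v$; this is exactly the node $u^s_W(v)$ defined before the lemma (and equal to $-1$, i.e.\ undefined, when $v$ is not reached by any self-activated node). Hence a.s.\ $v\in \Gamma_W(\{u\})$ if and only if $u^s_W(v) = u$, which yields $|\Gamma_W(\{u\})| = \sum_{v\in V}\I\{u^s_W(v) = u\}$.

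Next I would take expectations over $W\sim\cW(q,\Delta,p,D)$ and use linearity of expectation, then rewrite the sum over roots as $n$ times an average:
\[
\rho(\{u\}) = \E_W\Big[\sum_{v\in V}\I\{u^s_W(v)=u\}\Big] = \sum_{v\in V}\Pr_W[u^s_W(v)=u] = n\cdot \frac{1}{n}\sum_{v\in V}\Pr_W[u^s_W(v)=u].
\]
Interpreting $u^s_W$ as $u^s_W(v)$ with $v$ drawn uniformly at random from $V$ independently of the possible-world randomness (as in the statement of the lemma, so the expectation on the right-hand side is over both $W$ and the uniform root), the last average is exactly $\E[\I\{u = u^s_W\}]$, giving \eqref{eq:rrpim}. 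I would also remark that the procedure \PRR{} (Algorithm~\ref{alg:pwrr}) is a faithful sampler of the pair $(v,u^s_W(v))$: its Dijkstra-like reverse exploration pops shadow nodes in increasing total delay, so the first self-activated shadow node encountered is precisely the one minimizing $T_W(\cdot,v)$, i.e.\ $u^s_W(v)$ (returning $-1$ when none exists); thus \IMMPIM{} indeed estimates the quantity on the right-hand side.

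The only delicate point is the first step: one must check that the somewhat intricate quantifier structure in the definition of $\Gamma_W(\{u\})$ really collapses to the single condition $u^s_W(v)=u$ --- in particular that ``$\exists P$'' together with ``$\forall u',\forall P'$'' is equivalent to comparing the \emph{minimum} total delays from $u$ and from the other self-activated nodes, and that the strict inequality in the definition is consistent. This is exactly where the continuity / no-ties assumption is used, so that the argmin is well defined and no node is lost at a tie; once that is in place the remainder is routine linearity-of-expectation bookkeeping identical in spirit to the standard RR-set argument of \cite{BorgsBrautbarChayesLucier,tang14}.
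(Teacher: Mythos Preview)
Your proof is correct and follows essentially the same approach as the paper's own proof: both hinge on the pointwise equivalence $v\in\Gamma_W(\{u\})\iff u^s_W(v)=u$ (under the no-ties assumption) and then reduce the claim to linearity of expectation together with the interpretation of the uniform root. The only cosmetic difference is that the paper runs the chain of equalities from the right-hand side to the left and leaves the equivalence step implicit, whereas you start from $\rho(\{u\})$ and spell out the quantifier collapse and the role of continuity more carefully; your extra remark about \PRR{} being a faithful sampler is not needed for the lemma itself but is accurate.
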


With Lemma~\ref{lem:pim}, we can randomly select a root $v$, and reverse simulate from $v$ to find the node
	$u^s_W(v)$, and for each such node $u = u^s_W(v)$, we compute the fraction of times it appears in the reverse simulation, 
	and multiply it with $n$ to get $u$'s preemptive influence spread $\rho(\{u\})$.
This reverse simulation procedure has been done as part of {\PRR} algorithm, and its output $u^s$ is the
	$u^s_W$ we refer here.

With the above new way of reverse simulation, we can plug it into the {\IMM} framework to obtain our algorithm {\IMMPIM}~\ref{alg:immpim}.
The main difference is that we do not need greedy {\NodeSelection} procedure to give a $1-1/e$ approximation of the optimal seed
	set covering the RR set sequence $\cR$.
Instead, each node $u$ maintains a counter $est_u$ to record the number of times the reverse simulation hits $u^s = u$, and we just select
	the top $k$ nodes with the largest counters as our output set.
This would give a $1-\varepsilon$ approximation instead of the $1-1/e - \varepsilon$ approximation as previous algorithms.
For the same reason, the parameter $\lambda^*(\ell)$ should be redefined, replacing the factor $1-1/e$ in the parameter with $1$.
With these changes, {\IMMPIM} has the following theoretical guarantee.
\begin{restatable}{theorem}{thmIMMPIM} \label{thm:IMMPIM}
	Let $S^*$ be the optimal solution of the PIM.
	For every $\varepsilon > 0$ and $\ell > 0$, with probability at least
	$1-\frac{1}{n^\ell}$, the output $S^o$ of the {\IMMBPIM} algorithm 
	satisfies
	\begin{equation*}
		\rho(S^o) \geq \left(1- \varepsilon\right) \rho(S^*).
	\end{equation*}
	In this case, 
	the expected running time of the {\IMMPIM} algorithm 
	is $O((k + \ell)(n + m)\log^2 n/\varepsilon^2 \cdot (\E[\sigma(\tilde{v})]/ \rho(S^*)))$.
\end{restatable}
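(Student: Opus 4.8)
The plan is to follow the analysis template of the \IMM{} algorithm~\cite{tang15}, but to exploit the additivity of $\rho$ (Lemma~\ref{lem:properties}) so that the greedy $1-1/e$ step is replaced by an exact top-$k$ selection; this is precisely what upgrades the ratio from $1-1/e-\varepsilon$ to $1-\varepsilon$. Concretely, by Lemma~\ref{lem:pim} and additivity, for any node set $A$ we have $\rho(A) = n\cdot\E_W[\I\{u^s_W\in A\}]$, so if $W_1,\dots,W_\theta$ are the independent possible worlds underlying the $\theta$ \PRR{} calls and $est_u=\sum_j \I\{u^s_{W_j}=u\}$, then $\hat\rho(A):=(n/\theta)\sum_{u\in A}est_u$ is an unbiased estimator of $\rho(A)$ and, crucially, is itself additive in $A$. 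Hence the output $S^o$ --- the $k$ nodes with the largest $est_u$ --- is exactly $\argmax_{|A|\le k}\hat\rho(A)$, so $\hat\rho(S^o)\ge\hat\rho(S^*)$ holds deterministically, where $S^*$ is the optimum (itself the top $k$ nodes by true $\rho(\{\cdot\})$).

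The correctness argument then has two Chernoff-bound ingredients. First, for the fixed set $S^*$, the quantity $\theta\cdot\hat\rho(S^*)/n$ is a sum of $\theta$ i.i.d.\ Bernoulli variables with mean $\rho(S^*)/n$, so a lower-tail bound gives $\hat\rho(S^*)\ge(1-\varepsilon_1)\rho(S^*)$ except with probability $\exp(-\Omega(\varepsilon_1^2\theta\,\rho(S^*)/n))$. Second, by an upper-tail bound together with a union bound over the at most $\binom{n}{k}$ size-$k$ candidate sets, with high probability every $A$ with $|A|=k$ and $\rho(A)<(1-\varepsilon)\rho(S^*)$ satisfies $\hat\rho(A)<(1-\varepsilon_1)\rho(S^*)$; choosing $\varepsilon_1$ slightly below $\varepsilon$ (exactly the $\varepsilon,\varepsilon'$ splitting used in \IMM) forces $S^o$ out of this ``bad'' family, i.e.\ $\rho(S^o)\ge(1-\varepsilon)\rho(S^*)$. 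Tracing the constants shows both events hold with probability $\ge 1-n^{-\ell}$ once $\theta\ge\tilde{\lambda}^*(\ell)/\rho(S^*)$, where $\tilde{\lambda}^*(\ell)$ is $\lambda^*(\ell)$ of Eq.~\eqref{eq:lambdastar} with the factor $(1-1/e)$ replaced by $1$ --- the quantity used in Algorithm~\ref{alg:immpim}.

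It remains to show Phase~1 actually generates at least this many P-RR samples but not too many more. This is the standard \IMM{} lower-bound estimation argument: the halving loop over $x_i=n/2^i$ with the test $n\cdot\topk\ge(1+\varepsilon')x_i$ returns an $LB$ that, with high probability, satisfies $LB\le\rho(S^*)$ (so $\theta=\tilde{\lambda}^*(\ell)/LB$ is large enough for the concentration above) and $LB=\Omega(\rho(S^*))$ (so $\theta$, and hence total work, stays near-linear); here one uses that $\topk$ is an additive, unbiased estimator of the top-$k$ value of $\rho(\{\cdot\})$ and reapplies the same Chernoff bounds at each $x_i$. Because the number of samples is chosen adaptively and the $est_u$ counters are reused across iterations, the final counters are not a fixed-length i.i.d.\ sum; the martingale/stopping-time device in the \IMM{} analysis --- equivalently, ``Workaround~2'' of~\cite{Chen18} cited in the pseudocode --- handles this, and this bookkeeping is the one genuinely delicate point of the proof.

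For the running time, each \PRR{} call is a Dijkstra-like reverse exploration, and with $Q$ implemented as a priority queue its cost is $O(\log n)$ times the number of real and shadow nodes it inserts; an amortized argument identical to the one behind Theorem~\ref{thm:IMMBPIM} bounds the expected cost of one call by $O((m/n)\log n)$ times the expected number of such nodes, whose $n$-scaled expectation is at most $\E[\sigma(\tilde v)]$. Multiplying by $\theta=\tilde{\lambda}^*(\ell)/LB=\Theta\big((k+\ell)n\log n/(\varepsilon^2\rho(S^*))\big)$ yields the claimed $O\big((k+\ell)(n+m)\log^2 n/\varepsilon^2\cdot(\E[\sigma(\tilde v)]/\rho(S^*))\big)$ bound, the extra $\log n$ over \IMM{}/\IMMBIM{} coming exactly from the priority-queue operations. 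The hard part, as noted, is not any individual inequality but making all the concentration bounds valid in the presence of the data-dependent sample size and the reused $est_u$ counters --- a subtlety inherited from the \IMM{} framework rather than new to PIM.
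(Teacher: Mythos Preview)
Your proposal is correct and follows essentially the same approach as the paper: exploit additivity of $\rho$ (Lemma~\ref{lem:properties}) and Lemma~\ref{lem:pim} to replace greedy {\NodeSelection} by exact top-$k$ selection on the $est_u$ counters, swap $\lambda^*(\ell)$ for $\tilde\lambda^*(\ell)$ (i.e.\ $1-1/e\mapsto 1$), and otherwise reuse the {\IMM} concentration and Phase-1 analysis, with the extra $\log n$ in the running time coming from the priority queue in {\PRR}. In fact you supply considerably more detail than the paper's own sketch (the two-sided Chernoff plus $\binom{n}{k}$ union bound, and the martingale/Workaround-2 issue for adaptively chosen $\theta$), all of which is consistent with what the paper defers to~\cite{tang15,Chen18}.
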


%

\section{Empirical Evaluation} \label{sec:experiment}

The main purpose of our empirical evaluation is to validate if and when using our self-activation 
aware influence maximization algorithms are beneficial over using the classical
self-activation oblivious algorithms, and to quantify is the difference in performance.
We conduct experiments on two real-world social networks to test the performance 
of our algorithms and compare them with the classical influence maximization and the Shapley centrality algorithms.

\begin{figure*}[tb]
	\centering
	\subfloat[PIM-NetHEPT]{\includegraphics[width=1.38in]{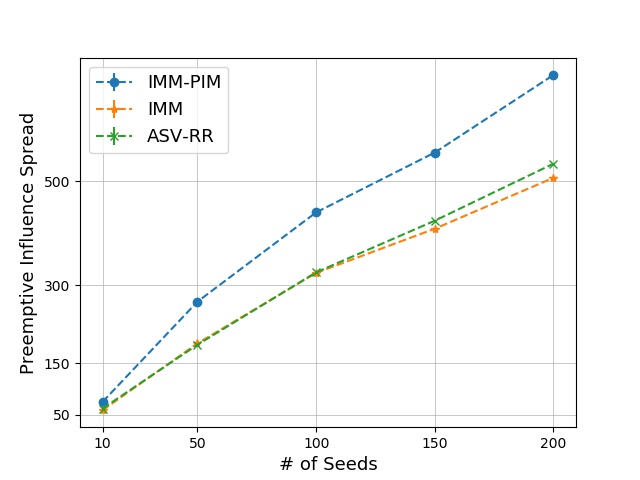}}\quad
	\subfloat[PIM-Flixster]{\includegraphics[width=1.38in]{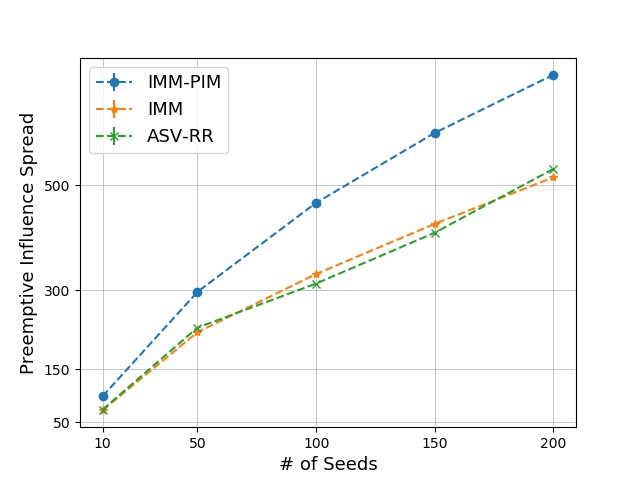}}\quad
	\subfloat[BPIM-NetHEPT]{\includegraphics[width=1.38in]{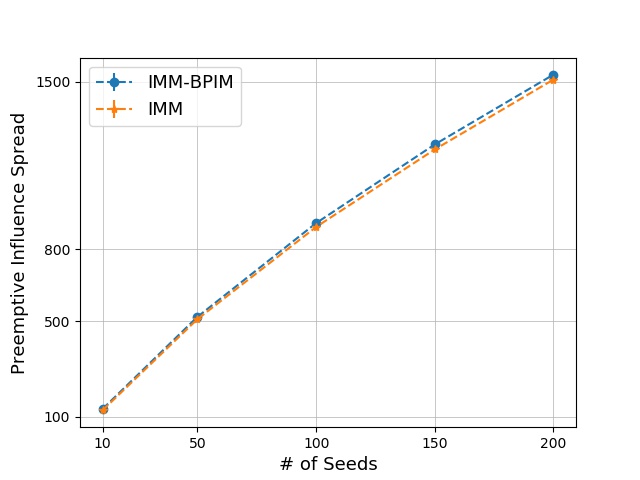}}\quad
	\subfloat[BPIM-NetHEPT]{\includegraphics[width=1.38in]{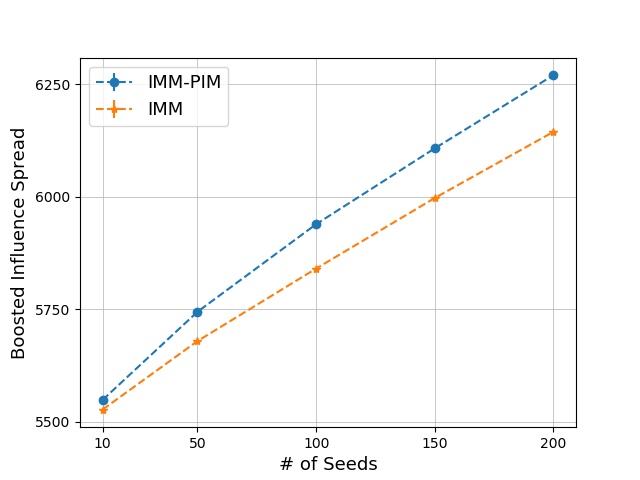}}
	\caption{Influence Spread Results ($\varepsilon = 0.1$ with case 3).} \label{fig:infscore}
\end{figure*}

\begin{table*}[t!]
	\centering
	\caption{Running time results (in seconds).} \label{tab:time1}
	\resizebox{6.0in}{!}{
		\begin{tabular}{|l|c|c|c|c|c|c|l|c|c|c|c|c|}
			\cline{1-6} \cline{8-13}
			Data    &  \IMMBIM & \IMMPIM & \IMMBPIM & \IMM    & \ASVRR &  & Data    &  \IMMBIM & \IMMPIM & \IMMBPIM & \IMM    & \ASVRR \\ \cline{1-6} \cline{8-13} 
			NetHEPT & 0.7534  & 195.75  & 48.123   & 1.9712 & 74.175 &  & Flixster & 1.3516 & 955.45  & 218.51   & 5.1072 & 235.34 \\ \cline{1-6} \cline{8-13}
		\end{tabular}
	}
\end{table*}

\subsection{Experiment Setup}

{\bf Data Description.\ \ } We use the following two datasets, all of which have been used in a number of influence
maximization studies.
(1) \textbf{Flixster}.
The Flixster dataset is a network of American social movie discovery service (www.flixster.com). 
To transform the dataset into a weighted graph, 
each user is represented by a node, 
and a directed edge from node $u$ to $v$ is formed if $v$ rates one movie shortly 
after $u$ does so on the same movie. 
The dataset is analyzed in \cite{barbieri2012topic}, 
and the influence probability are learned by the topic-aware model. 
We use the learning result of \cite{barbieri2012topic} in our experiment, 
which is a graph containing \num{29357} nodes and \num{212614} directed edges. 
There are 10 probabilities on each edge, 
and each probability represents the influence
from the source user to the sink on a specific topic. 
In our experiment, we test the first topic.
(2) \textbf{NetHEPT}. 
The NetHEPT dataset \cite{ChenWY09} is extensively used in many influence maximization studies. 
It is an academic collaboration network from the ``High Energy Physics Theory'' section of arXiv from 1991 to 2003, 
where nodes represent the authors and each edge represents one paper co-authored by two nodes. 
There are \num{15233} nodes and \num{58891} undirected edges (including duplicated edges) in the NetHEPT dataset.
We clean the dataset by removing those duplicated edges and obtain a directed graph $G = (V,E)$,
$|V|$ = \num{15233}, $|E|$ = \num{62774} (directed edges). 
The propagation probability on edges are set by weighted cascade model~\cite{kempe03}:
	the probability of an edge $(u,v)$ is set as the inverse of the in-degree of $v$.
(3) \textbf{DBLP}. 
The DBLP dataset \cite{WCW12} is an academic collaboration network extracted from online archive
DBLP (dblp.uni-trier.de). There are 654K nodes and 1990K directed edges in the DBLP.
The propagation probabilities on the edges are also set by the weighted cascade model.

\noindent
{\bf Algorithms. \ \ }
For the two problems (PIM, BPIM), we test our corresponding algorithms with
the baseline {\IMM}, which is oblivious to the self-activation behaviors and treat
all nodes as no self activation and seed nodes as activated at time $0$.
For the PIM problem, we further compare against the efficient Shapley computation
algorithm {ASV-RR} proposed in~\cite{chen2017interplay}, which essentially treats
all nodes as self-activations in a uniform random order.
We use the same parameters settings for these algorithms: $\ell = 1$, $\varepsilon = 0.1$.
We test seed set sizes of $10, 50, 100, 150$ and $200$.

\noindent
{\bf Self-activation parameters and test cases. \ \ }
In practice, self-activation delays can be estimated from the users' access patterns to online
social networks, and self-activation probabilities can be estimated from the
fraction of times users' participating in information cascades not due to the influence
from the neighbors or external selections as seed users.
Unfortunately for the datasets we use, these information are not available.
Instead, we use synthetic settings, focusing on whether the knowledge of the self-activation
behaviors would benefit our algorithm design.
For self-activation probability $q(u)$ of node $u$, we 
first randomly select a value $\beta_u$ from $[0,c]$ as
a node $u$'s base value, then we further test five cases:
(0) uniform: $q(u) = \alpha_u$; 
(1) positively correlated: $q(u)$ is positively correlated with $u$'s out-degree
$d^+(u)$, in particular $q(u) = \min\{\beta_u \cdot d^+(u), 1\}$;
(2) negatively correlated: $q(u) = \beta_u / d^+(u)$;
(3) random mixing of cases 0 and 1: randomly pick half of the nodes with 
$q(u) = \alpha_u$ and the other half with $q(u) = \min\{\beta_u \cdot d^+(u), 1\}$;
(4) random mixing of cases 0 and 2: randomly pick half of the nodes with 
$q(u) = \alpha_u$ and the other half with $q(u) = \beta_u / d^+(u)$.
These five cases aim at scenarios where all users are equally likely to react to
a campaign (case 0), high-degree nodes (usually more influential) are more likely
or less likely to react to the campaigns, and mixture of uniform behavior and a 
correlation (or reverse correlation) behavior.
We set $c = 2$ for BPIM and PIM tests,
because otherwise the preemptive influence spread for PIM is too small.
For self-activation delays, we use exponential distribution with rate $1$ for all nodes.
We already vary the self-activation behaviors through the self-activation probabilities,
and thus we simply keep the self-activation delay distributions uniform.
We also use the same exponential distribution for propagation delay distributions.
Two proposed algorithms and two baselines are written in c++ and compiled by Visual studio.
All experiments are conducted  
on a 15'' MacBook Pro with a \SI{2.5}{GHz} Intel Core~i7 and
\SI{16}{GB} of \SI{1600}{MHz} DDR3 memory. 


\vspace{-10pt}
\subsection{Results}


\noindent
{\bf Influence spread result. \ \ } As we can see in Figure~\ref{fig:infscore},
\IMMPIM\ significantly outperforms the baselines on the achieved preemptive influence spread,
and \IMMBPIM\ also outperforms others when self-activation behaviors of the nodes
are non-uniform.
On average \IMMPIM\ improves about 32.7\% than the other baselines, \IMMBPIM\ improves about 2.3\% than \IMM\ algorithm, and \IMMBIM\ improves about 2.1\% than \IMM\ algorithm..
Detailed results are in extended version \cite{sun2019self}.

\noindent
{\bf Running time results. \ \ }
Table~\ref{tab:time1} reports the running time of all algorithms on both datasets,
by using the default setting with the seed set size $k = 200$ in test case 3.
We can clearly see the order of running time is $\BIM < \IMM < \IMMBPIM < \ASVRR < \IMMPIM$
(we ignore the prefix {\IMM} in our algorithm to fit the table width).
This is inline with our theoretical analysis, which shows that the running time is
inversely proportional to the optimal value of each problem.
For example, the optimal solution of BIM is larger than that of the classical 
	influence maximization because BIM has self-activated nodes contributing extra influence spread, 
	and PIM has the smallest optimal value because 
the self-activation probabilities are small in general and the optimal set has
to compete with other self-activated nodes on preemptive influence spread.
{\IMMBPIM} and {\IMMPIM} are further slower due to the Dijkstra-like reverse simulation, which
takes more time than the simple breadth-first-search simulation.
But even for the slowest {\IMMPIM} algorithm, on the Flixster dataset with more than hundred thousands nodes and edges, it could complete in less than 16 minutes on our laptop test machine.
Besides the dataset, running time is also related to the $\varepsilon$.
We also test the $\varepsilon$ values from 0.1 to 0.5 on NetHEPT for the proposed algorithms, and the details are in the extented version due to the space constraint \cite{sun2019self}.



In summary, our tests clearly demonstrate that for the PIM problem targeted for identifying
top organic influencers in a graph with self-activation, our {\IMMPIM} algorithm significantly
outperform other baselines in terms of result quality, which suggests that knowing
the self-activation behavior is important for this task.
For BPIM, our algorithm have small improvements for certain test cases with
non-uniform self-activation behaviors.
This may suggest that baseline such as {\IMM} may be usable for these tasks, but one could
still benefit from our algorithms in certain cases.

\vspace{-5pt}

\section{Conclusion and Future Work}

We introduce self activation and preemptive influence maximization task in this study.
A future direction of our study would be incorporating self activation and preemptive
influence spread considerations into other influence maximization tasks, 
such as competitive and complementary influence maximization, adaptive and
online influence maximization, etc.

\vspace{-5pt}
\section{Acknowledgments}
This work is supported in part by NSF under grants III-1526499, III-1763325, III-1909323, CNS-1930941, and CNS-1626432.

\bibliographystyle{ACM-Reference-Format}
\bibliography{bibdatabase} 
\clearpage

\appendix

\section{Pseudocode for {\IMMPIM}}

The parameter $\lambda^*(\ell)$ is replaced by $\tilde{\lambda}^*(\ell)$, defined as follows:
we redefine it as $\tilde{\lambda}^*(\ell)$ below:
\begin{align}
	& \tilde{\lambda}^*(\ell) \leftarrow 2 n \cdot (\alpha + \tilde{\beta})^2 \cdot \varepsilon^{-2} 
	\label{eq:tildelambdastar} \\
	& 	\alpha \leftarrow \sqrt{\ell \ln{n} + \ln{2}};
	\tilde{\beta} \leftarrow \sqrt{\ln{\binom{n}{k}}+\alpha^2}.\nonumber
\end{align}

The pseudocode of {\IMMPIM} is given in Algorithm~\ref{alg:immpim}.
The main difference comparing with {\IMM} or {\IMMBPIM} is that:
	(a) replacing the greedy {\NodeSelection} procedure by simply counting
		the number of occurrences of each node $u$ as the first node reaching $v$
		in the reverse simulation (through variable $est_u$) and selecting the
		top $k$ nodes with the highest number of occurrences; and
	(b) replacing $\lambda^*(\ell)$ with $\tilde{\lambda}^*(\ell)$.

\begin{algorithm}[t] 
	\caption{{\IMMPIM}: Preemptive IMM Algorithm} \label{alg:immpim}
	\KwIn{Graph $G=(V,E)$, self-activation probabilities $q$, self-activation delay distributions $\Delta$, propagation probabilities $p$, 
		propagation delay distributions $D$, budget $k$, accuracy parameters $(\varepsilon, \ell)$
	}
	\KwOut{set $S$}
	
	\tcp{Phase 1: Estimate $\theta$, the number of P-RR sets needed, and generate these P-RR sets}
	$\newalg{\cR \leftarrow \emptyset}$;  $LB \leftarrow 1$; $\varepsilon' \leftarrow \sqrt{2}\varepsilon$\;
	using binary search to find a $\gamma$ such that
	$\lceil\tilde{\lambda}^*(\ell) \rceil /n^{\ell+\gamma} \le 1/n^\ell$ 
	\tcp{Workaround 2 in~\cite{Chen18}, $\tilde{\lambda}^*(\ell)$ is defined in Eq.~\eqref{eq:tildelambdastar}}
	$\newalg{\ell \leftarrow \ell + \gamma + \ln 2 / \ln{n}} $\;

	$est_u \leftarrow 0$ for every $u \in V$\;
	\For{$i = 1$ to $\lfloor\log_2{n}\rfloor - 1$}{
		$x_i \leftarrow \newalg{n}/2^i$\; \label{line:assignx}
		$\theta_i \leftarrow \lambda'/x_i$; \tcp{$\lambda'$ is defined in Eq.~\eqref{eq:lambdaprime}}
		\While{$\newalg{|\cR|} \leq \theta_i$}{
			Select a node $v$ from $V$ uniformly at random\;\label{line:pimsample1}
			$(-, u) \leftarrow \PRR(v,G,q,\Delta,p,D)$;
			\tcp{generate a random P-RR set pair $(R^P,u)$, need the
				returned node $u$ that is both self-activated and the earliest in reaching root $v$,
				but ignore the set $R^P$} \label{line:pimgenRRset1a}
			\If{$u \ne -1$}{
				$est_{u} \leftarrow est_{u} + 1$\;
				\label{line:pimupdate1}
			}
		}
		$\topk \leftarrow $ sum of the top $k$ largest values in $\{est_u\}_{u \in V}$\;
		\If{$n \cdot \topk \geq (1 + \epsilon') \cdot x_i$}{
			\label{line:pimestimate1}
			$LB \leftarrow \newalg{ n \cdot \topk/(\theta_i \cdot (1 + \epsilon'))}$\; 
			\label{line:pimestimate2}
			{\bf break}\;
		} 
	}
	$\theta \leftarrow \tilde{\lambda}^* / LB$; \tcp{$\tilde{\lambda}^*$ is defined in Eq.~\eqref{eq:tildelambdastar}}
	\While{$\newalg{|\cR|} \leq \theta$}{
		Select a node $v$ from $V$ uniformly at random\;\label{line:pimsample2}
		$(-, u) \leftarrow \PRR(v,G,q,\Delta,p,D)$\;\label{line:pimgenRRset1b}
		\If{$u \ne -1$}{
			$est_{u} \leftarrow est_{u} + 1$\;
			\label{line:pimupdate2}
		}
	}
	\tcp{Phase 2: obtain the top $k$ nodes}
	$S \leftarrow$ set of top $k$ nodes with the largest values in $\{est_u\}_{u \in V}$\;
	\label{line:pimoutput}
	\bf{return} $S$.
\end{algorithm}

\vspace{-5pt}
\section{Proofs of Theorems and Lemmas}

\begin{proof}[Proof of Lemma~\ref{lem:properties}]
	We prove all three statements on a fixed possible world $W=(A_W, \delta_W, L_W, d_W)$, 
	since taking expectation on $W$ preserves
	additivity, monotonicity, and submodularity.
	For (1), it is straightforward if we observe that for each activated node $v$ in $W$, there is
	a unique source node $u \in A_W$ that is the first activating $v$, by our assumption that
	no two live paths in $W$ have the same total delay. 
	This implies that $\Gamma_W(\{u\}) \cap \Gamma_W(\{u'\}) = \emptyset$ for any two different 
	nodes $u,u' \in V$, 
	and thus $|\Gamma_W(A)| = \sum_{u\in A} \Gamma_W(\{u\})$, for any $A \subseteq V$.
	For (2), monotonicity is trivial.
	For submodularity, it is sufficient to prove that for any two subsets $S\subseteq T \subseteq V$ 
	and $u \in V \setminus T$, 
	$\Gamma^B_W(T\cup \{u\}) \setminus \Gamma^B_W(T) \subseteq 
	\Gamma^B_W(S\cup \{u\}) \setminus \Gamma^B_W(S) $.
	For a node $v \in \Gamma^B_W(T\cup \{u\}) \setminus \Gamma^B_W(T)$, 
	there must exist a live path from $u$ to $v$ in $W$, such that 
	the total delay of $P$, $T_W(P)$, is the minimum among all live paths from $T \cup \{u\} \cup A_W$
	to $v$, which also implies that $T_W(P)$ is strictly less than the total delay of any live path
	from $T \cup A_W$  to $v$.
	Since $S \subseteq T$, this directly implies that $v \in \Gamma^B_W(S\cup \{u\})$
	but $v \not\in \Gamma^B_W(S) $.
	For (3), again the monotonicity is trivial, and submodularity proof is similar to (2).
	It is sufficient to prove that for any two subsets $S\subseteq T \subseteq V$ 
	and $u \in V \setminus T$, 
	$\Phi^B_W(T\cup \{u\}) \setminus \Phi^B_W(T) \subseteq 
	\Phi^B_W(S\cup \{u\}) \setminus \Phi^B_W(S) $.
	For a node $v \in \Phi^B_W(T\cup \{u\}) \setminus \Phi^B_W(T)$, 
	there is no live path from any node in $T\cup A_W$ to $v$ but 
	there exists a live path from $u$ to $v$ in $W$.
	Since $S \subseteq T$, this directly implies that $v \in \Phi^B_W(S\cup \{u\})$
	but $v \not\in \Phi^B_W(S) $.
\end{proof}

\begin{proof}[Proof of Lemma~\ref{lem:bim}]
	\begingroup
	\allowdisplaybreaks
	\begin{align}
		& \E_{W\sim \cW(q,\Delta,p,D)}[\I\{ (S \cup A_W)\cap R_W \ne \emptyset \}] \nonumber \\
		& = \sum_{v\in V} \Pr\{v = \rroot(R_W)\} \cdot 
		\E_{W}[\I\{ (S \cup A_W)\cap R_W \ne \emptyset \mid \nonumber \\ 
		& \qquad \qquad	 v = \rroot(R_W)\}] 
		\nonumber \\
		& = \frac{1}{n} \sum_{v\in V} \E_{W}[\I\{ (S \cup A_W)\cap R_W(v) \ne \emptyset \} ]
		\nonumber \\
		& = \frac{1}{n} \sum_{v\in V} \E_W[\I\{ v\in \Phi^B_W(S) \} ] \label{eq:bimimm} \allowdisplaybreaks \\
		& = \frac{1}{n} \cdot \E_W[|\Phi^B_W(S) | ] \nonumber \\
		& =  \frac{1}{n} \cdot \sigma^B(S), \nonumber 
	\end{align}
	\endgroup
	%
	%
	%
	%
	where Eq.~\eqref{eq:bimimm} is by the definitions of RR set $R_W(u)$ and final activated set $\Phi^B_W(S)$.
\end{proof}

\begin{proof}[Proof of Theorem~\ref{thm:bimw} (Sketch)]
The proof follows the same structure as the proof of {\IMM} in~\cite{tang15} together with the workaround 2 
	summarized in~\cite{Chen18}.
Let $\cR$ be the sequence of RR sets where every $R\in \cR$ has no self-activated nodes, and thus $\cR$ is the sequence
	generated by the {\IMMBIM} algorithm (Algorithm~\ref{alg:immbim}).
Let $\cR^S$ be the sequence of RR sets where every $R\in \cR^S$ contains some node that is self-activated.
By Lemma~\ref{lem:bim}, we can have an unbiased estimate of boosted influence spread as
	$\hat{\sigma}^B(S) = n \cdot (\sum_{R\in \cR} \I\{ S\cap R \ne \emptyset \} + |\cR^S|)/(|\cR| + |\cR^S|) $.
Note that $|\cR^S|$ is exactly maintained by variable $\covered$ in Algorithm~\ref{alg:immbim}, therefore, the above formula
	matches the definition of $F^S_{\cR}(S)$ in Eq.\eqref{eq:fracCover}.
Also notice that only for RR sets not containing self-activated nodes, we need to find seeds to cover them, and thus
	the procedure {\NodeSelection} only takes $\cR$ as the input.
The rest of the analysis would follow the same way as the {\IMM} analysis, since the boosted influence spread
	is monotone submodular (Lemma~\ref{lem:properties}), same as the classical influence spread.
\end{proof}

\begin{proof}[Proof of Lemma~\ref{lem:bpim}]
	\begin{align}
		& \E_{W\sim \cW(q,\Delta,p,D)}[\I\{ S \cap R^P_W \ne \emptyset \}]  \nonumber \\
		& = \sum_{v\in V} \Pr\{v = \rroot(R^P_W)\} \cdot \E_W[\I\{ S \cap R^P \ne \emptyset \mid v = \rroot(R^P_W)\}]
		\nonumber \\
		& = \frac{1}{n} \sum_{v\in V} \E_W[\I\{ S\cap R^P_{W}(v) \ne \emptyset \}]
		\nonumber \\
		& = \frac{1}{n} \sum_{v\in V}\E_W[ \I\{v \in \Gamma^B_W(S)\} ] \label{eq:bpimimm} \allowdisplaybreaks \\
		& = \frac{1}{n} \cdot \E_W[|\Gamma^B_W(S)| ] \nonumber \\
		& =  \frac{1}{n} \cdot \rho^B(S), \nonumber 
	\end{align}
	where Eq.~\eqref{eq:bpimimm} is due to the definitions of the P-RR set and 
	$\Gamma^B_W(S)$, the set of nodes that are first reached by some node in $S$.
\end{proof}

\begin{proof}[Proof of Theorem~\ref{thm:IMMBPIM} (Sketch)]
Lemma~\ref{lem:bpim} has the same form as Eq.~\eqref{eq:RRsetSpread}, and the boosted preemptive influence spread
	is monotone submodular (Lemma~\ref{lem:properties}), same as the classical influence spread.
Therefore, our {\IMMBPIM} algorithm runs the same as {\IMM} and achieves the same approximation guarantee result.
The only difference is in the generation of P-RR sets, which affects time complexity.
For a P-RR set $R^P$, Let $R$ be the corresponding RR set that contains all nodes visited during the reverse
	simulation process.
Similar to the analysis in~\cite{tang15}, let the width of $R$, $\omega(R)$, be the number of incoming edges pointing to nodes in $R$.
Let $EPT = \E[\omega(R)]$.
In {\IMM}, $EPT$ would be the expected running time of generating one RR set.
But in our algorithm, in particular in {\PRR} algorithm (Algorithm~\ref{alg:pwrr}), we need to use a priority queue for set $Q$
	to support element insertion, deletion, update, and finding min operations, same as the implementation of the Dijkstra
	shortest path algorithm.
Therefore, the expected running time of generating one P-RR set is $O(\E[\omega(R) + |R| \log |R|]) = 
	O(\E[\omega(R) \log n]) = O(EPT \log n)$, leading to an extra $\log n$ factor.
From~\cite{tang15}, we further know that $EPT = m \E[\sigma(\tilde{v})]/n$, where $\tilde{v}$ is a node randomly selected
	with probability proportional to its indegree, and $\sigma(\tilde{v})$ is the classical influence spread of $\tilde{v}$
	in the corresponding IC model.
Let $\theta$ be the number of P-RR sets generated, and let $\OPT = \rho^B(S^*)$ be the optimal solution for the BPIM problem.
Based on the {\IMM} analysis in~\cite{tang15}, we know that the total expected running time is given by 
\begin{align*}
&O(\E[\theta] \cdot \E[\omega(R) + |R| \log |R|]) \\
& = O\left(\frac{(k+\ell)n \log n}{\OPT \cdot \varepsilon^2} \cdot EPT \log n \right) \\
& = O\left( \frac{(k+\ell)(m+n)\log^2 n}{\varepsilon^2} \cdot \frac{\E[\sigma(\tilde{v})]}{\OPT}  \right).
\end{align*}	
Finally, by our assumption that the optimal solution is at least as large as $\E[\sigma(\tilde{v})]$, 
	 the time complexity is as stated in the theorem.
\end{proof}

\begin{proof}[Proof of Lemma~\ref{lem:pim}]
	\begin{align}
		& \E_{W\sim \cW(q,\Delta,p,D)}[\I\{ u = u^s_W\}]  \nonumber \\
		& = \frac{1}{n} \sum_{v\in V}\cdot \E[\I\{ u = u^s_W(v) \}]
		\nonumber \\
		& = \frac{1}{n} \sum_{u\in V}\E[ \I\{v \in \Gamma_W(\{u\})\} ] \nonumber \allowdisplaybreaks \\
		& = \frac{1}{n} \cdot \E[|\Gamma_W(u)| ] \nonumber \\
		& =  \frac{1}{n} \cdot \rho(\{u\}). \nonumber 
	\end{align}
\end{proof}

\begin{proof}[Proof of Theorem~\ref{thm:IMMPIM} (Sketch)]
By Lemma~\ref{lem:properties}, the preemptive influence spread function $\rho$ is additive.
Therefore, we can find top $k$ nodes with the largest preemptive influence spread individually, and together they form 
	the optimal solution for the PIM problem.
To estimate individual node's preemptive influence spread, we utilize Lemma~\ref{lem:pim}, such that the number of times
	a node $u$ is identified as the source node $u^s$ by the {\PRR} algorithm directly corresponds the preemptive
	influence spread $\rho({u})$.
Therefore, we only need to replace the {\NodeSelection} procedure with counting the appearances of each node as sources
	(via variable $est_u$) and selecting the top $k$ of them.
The parameter $\lambda^*(\ell)$ in the original {\IMM} algorithm is also replaced with $\tilde{\lambda}^*$ by
	replacing $(1-1/e)$ with $1$, because we find
	exact solution of top $k$ nodes appearing in most reverse simulations instead of a $1-1/e$ approximation.
With this, following the proof structure of the {\IMM} algorithm, we can show that our {\IMMPIM} algorithm guarantees
	$1-\varepsilon$ approximation.

As for time complexity, the analysis follows the same way as in the proof of Theorem~\ref{thm:IMMBPIM}.
That is, because of {\PRR} algorithm needs to use a priority queue to run a Dijkstra-like algorithm, we need an extra factor
	$\log n$ in the time complexity.
\end{proof}

\begin{table*}[t]
	\centering
	\caption{The performance of influence spread on NetHEPT.} \label{tab:heptspread}
	\resizebox{6.5in}{!}{%
		\begin{tabular}{ccccccc}
			\hline
			\multicolumn{1}{|c|}{NetHEPT}                      & \multicolumn{1}{c|}{Case}               & \multicolumn{1}{c|}{Method}   & \multicolumn{1}{c|}{10}             & \multicolumn{1}{c|}{50}              & \multicolumn{1}{c|}{100}            & \multicolumn{1}{c|}{200}            \\ \hline
			\multicolumn{1}{|c|}{\multirow{15}{*}{PIM}} & \multicolumn{1}{c|}{\multirow{3}{*}{0}} & \multicolumn{1}{c|}{\IMM}      & \multicolumn{1}{c|}{23.35 $\pm$ 0.40}   & \multicolumn{1}{c|}{75.19 $\pm$ 0.64}    & \multicolumn{1}{c|}{130.78 $\pm$ 0.78}  & \multicolumn{1}{c|}{208.24 $\pm$ 0.91}  \\ \cline{3-7} 
			\multicolumn{1}{|c|}{}                      & \multicolumn{1}{c|}{}                   & \multicolumn{1}{c|}{{\ASVRR}}   & \multicolumn{1}{c|}{25.09 $\pm$ 0.41}   & \multicolumn{1}{c|}{73.30 $\pm$ 0.64}    & \multicolumn{1}{c|}{131.96 $\pm$ 0.79}  & \multicolumn{1}{c|}{220.53 $\pm$ 0.96}  \\ \cline{3-7} 
			\multicolumn{1}{|c|}{}                      & \multicolumn{1}{c|}{}                   & \multicolumn{1}{c|}{{\IMMPIM}}  & \multicolumn{1}{c|}{30.32 $\pm$ 0.40 ($+ 20.1\%$)}   & \multicolumn{1}{c|}{108.44 $\pm$ 0.71 ($+ 44.2\%$)}   & \multicolumn{1}{c|}{184.17 $\pm$ 0.88 ($+ 39.6\%$)}  & \multicolumn{1}{c|}{289.16 $\pm$ 1.08 ($+ 31.1\%$)}  \\ \cline{2-7} 
			\multicolumn{1}{|c|}{}                      & \multicolumn{1}{c|}{\multirow{3}{*}{1}} & \multicolumn{1}{c|}{\IMM}      & \multicolumn{1}{c|}{59.26 $\pm$ 0.41}   & \multicolumn{1}{c|}{188.02 $\pm$ 0.69}   & \multicolumn{1}{c|}{323.70 $\pm$ 0.86}  & \multicolumn{1}{c|}{506.95 $\pm$ 1.01}  \\ \cline{3-7} 
			\multicolumn{1}{|c|}{}                      & \multicolumn{1}{c|}{}                   & \multicolumn{1}{c|}{{\ASVRR}}   & \multicolumn{1}{c|}{62.37 $\pm$ 0.41}   & \multicolumn{1}{c|}{185.44 $\pm$ 0.68}   & \multicolumn{1}{c|}{324.81 $\pm$ 0.86}  & \multicolumn{1}{c|}{534.16 $\pm$ 1.04}  \\ \cline{3-7} 
			\multicolumn{1}{|c|}{}                      & \multicolumn{1}{c|}{}                   & \multicolumn{1}{c|}{{\IMMPIM}}  & \multicolumn{1}{c|}{75.18 $\pm$ 0.42 ($+ 20.5\%$)}   & \multicolumn{1}{c|}{268.35 $\pm$ 0.73 ($+ 42.7\%$)}   & \multicolumn{1}{c|}{440.31 $\pm$ 0.91 ($+ 35.6\%$)}  & \multicolumn{1}{c|}{705.06 $\pm$ 1.10 ($+ 32.0\%$)}  \\ \cline{2-7} 
			\multicolumn{1}{|c|}{}                      & \multicolumn{1}{c|}{\multirow{3}{*}{2}} & \multicolumn{1}{c|}{\IMM}      & \multicolumn{1}{c|}{7.71 $\pm$ 0.27}    & \multicolumn{1}{c|}{25.33 $\pm$ 0.43}    & \multicolumn{1}{c|}{45.42 $\pm$ 0.54}   & \multicolumn{1}{c|}{76.45 $\pm$ 0.62}   \\ \cline{3-7} 
			\multicolumn{1}{|c|}{}                      & \multicolumn{1}{c|}{}                   & \multicolumn{1}{c|}{{\ASVRR}}   & \multicolumn{1}{c|}{8.04 $\pm$ 0.28}    & \multicolumn{1}{c|}{25.00 $\pm$ 0.44}    & \multicolumn{1}{c|}{44.81 $\pm$ 0.55}   & \multicolumn{1}{c|}{79.03 $\pm$ 0.69}   \\ \cline{3-7} 
			\multicolumn{1}{|c|}{}                      & \multicolumn{1}{c|}{}                   & \multicolumn{1}{c|}{{\IMMPIM}}  & \multicolumn{1}{c|}{10.66 $\pm$ 0.25 ($+ 55.8\%$)}   & \multicolumn{1}{c|}{39.96 $\pm$ 0.48 ($+ 86.5\%$)}    & \multicolumn{1}{c|}{69.877 $\pm$ 0.61 ($+ 69.1\%$)}  & \multicolumn{1}{c|}{123.14 $\pm$ 0.70 ($+ 56.9\%$)}  \\ \cline{2-7} 
			\multicolumn{1}{|c|}{}                      & \multicolumn{1}{c|}{\multirow{3}{*}{3}} & \multicolumn{1}{c|}{\IMM}      & \multicolumn{1}{c|}{49.93 $\pm$ 0.44}   & \multicolumn{1}{c|}{135.20 $\pm$ 0.67}   & \multicolumn{1}{c|}{234.35 $\pm$ 0.84}  & \multicolumn{1}{c|}{371.37 $\pm$ 0.98}  \\ \cline{3-7} 
			\multicolumn{1}{|c|}{}                      & \multicolumn{1}{c|}{}                   & \multicolumn{1}{c|}{{\ASVRR}}   & \multicolumn{1}{c|}{43.23 $\pm$ 0.41}   & \multicolumn{1}{c|}{139.90 $\pm$ 0.67}   & \multicolumn{1}{c|}{245.02 $\pm$ 0.85}  & \multicolumn{1}{c|}{407.87 $\pm$ 1.04}  \\ \cline{3-7} 
			\multicolumn{1}{|c|}{}                      & \multicolumn{1}{c|}{}                   & \multicolumn{1}{c|}{{\IMMPIM}}  & \multicolumn{1}{c|}{77.81 $\pm$ 0.48 ($+ 32.6\%$)}   & \multicolumn{1}{c|}{260.85 $\pm$ 0.79 ($+ 57.8\%$)}   & \multicolumn{1}{c|}{414.33 $\pm$ 0.97 ($+ 53.8\%$)}  & \multicolumn{1}{c|}{640.14 $\pm$ 1.15 ($+ 55.8\%$)}  \\ \cline{2-7} 
			\multicolumn{1}{|c|}{}                      & \multicolumn{1}{c|}{\multirow{3}{*}{4}} & \multicolumn{1}{c|}{\IMM}      & \multicolumn{1}{c|}{12.24 $\pm$ 0.30}   & \multicolumn{1}{c|}{53.80 $\pm$ 0.56}    & \multicolumn{1}{c|}{92.24 $\pm$ 0.69}   & \multicolumn{1}{c|}{145.59 $\pm$ 0.79}  \\ \cline{3-7} 
			\multicolumn{1}{|c|}{}                      & \multicolumn{1}{c|}{}                   & \multicolumn{1}{c|}{{\ASVRR}}   & \multicolumn{1}{c|}{18.88 $\pm$ 0.38}   & \multicolumn{1}{c|}{49.15 $\pm$ 0.56}    & \multicolumn{1}{c|}{89.53 $\pm$ 0.70}   & \multicolumn{1}{c|}{148.14 $\pm$ 0.84}  \\ \cline{3-7} 
			\multicolumn{1}{|c|}{}                      & \multicolumn{1}{c|}{}                   & \multicolumn{1}{c|}{{\IMMPIM}}  & \multicolumn{1}{c|}{29.82 $\pm$ 0.44 ($+ 57.9\%$)}               & \multicolumn{1}{c|}{97.93 $\pm$ 0.70 ($+ 82.0\%$)}                & \multicolumn{1}{c|}{159.18 $\pm$ 0.82 ($+ 72.6\%$)}               & \multicolumn{1}{c|}{258.13 $\pm$ 0.96 ($+ 74.2\%$)}               \\ 
			\hline 
			\hline 
			\multicolumn{1}{|c|}{\multirow{6}{*}{BIM}}  & \multicolumn{1}{c|}{\multirow{2}{*}{0}} & \multicolumn{1}{c|}{\IMM}      & \multicolumn{1}{c|}{3976.01 $\pm$ 2.01} & \multicolumn{1}{c|}{4295.73 $\pm$ 1.86}  & \multicolumn{1}{c|}{4606.85 $\pm$ 1.77} & \multicolumn{1}{c|}{5094.04 $\pm$ 1.69} \\ \cline{3-7} 
			\multicolumn{1}{|c|}{}                      & \multicolumn{1}{c|}{}                   & \multicolumn{1}{c|}{{\IMMBIM}}  & \multicolumn{1}{c|}{3976.77 $\pm$ 2.01 ($< 1\%$)} & \multicolumn{1}{c|}{4299.85 $\pm$ 1.90 ($< 1\%$)}  & \multicolumn{1}{c|}{4612.03 $\pm$ 1.81 ($< 1\%$)} & \multicolumn{1}{c|}{5097.34 $\pm$ 1.70 ($< 1\%$)} \\ \cline{2-7} 
			\multicolumn{1}{|c|}{}                      & \multicolumn{1}{c|}{\multirow{2}{*}{1}} & \multicolumn{1}{c|}{\IMM}      & \multicolumn{1}{c|}{5526.53 $\pm$ 1.83} & \multicolumn{1}{c|}{5678.88 $\pm$ 1.79}  & \multicolumn{1}{c|}{5840.28 $\pm$ 1.68} & \multicolumn{1}{c|}{6144.55 $\pm$ 1.61} \\ \cline{3-7} 
			\multicolumn{1}{|c|}{}                      & \multicolumn{1}{c|}{}                   & \multicolumn{1}{c|}{{\IMMBIM}}  & \multicolumn{1}{c|}{5547.92 $\pm$ 1.83 ($< 1\%$)} & \multicolumn{1}{c|}{5744.09 $\pm$ 1.78 ($+ 1.1\%$)}  & \multicolumn{1}{c|}{5938.97 $\pm$ 1.72 ($+ 1.7\%$)} & \multicolumn{1}{c|}{6270.64 $\pm$ 1.66 ($+ 2.1\%$)} \\ \cline{2-7} 
			\multicolumn{1}{|c|}{}                      & \multicolumn{1}{c|}{\multirow{2}{*}{3}} & \multicolumn{1}{c|}{\IMM}      & \multicolumn{1}{c|}{4829.76 $\pm$ 1.91} & \multicolumn{1}{c|}{5047.73  $\pm$ 1.82} & \multicolumn{1}{c|}{5272.48 $\pm$ 1.74} & \multicolumn{1}{c|}{5656.25 $\pm$ 1.65} \\ \cline{3-7} 
			\multicolumn{1}{|c|}{}                      & \multicolumn{1}{c|}{}                   & \multicolumn{1}{c|}{{\IMMBIM}}  & \multicolumn{1}{c|}{4841.66 $\pm$ 1.91 ($< 1\%$)} & \multicolumn{1}{c|}{5087.22  $\pm$ 1.85 ($< 1\%$)} & \multicolumn{1}{c|}{5330.86 $\pm$ 1.78 ($+ 1.1\%$)} & \multicolumn{1}{c|}{5739.56 $\pm$ 1.68 ($+ 1.5\%$)} \\ \hline
			\hline
			\multicolumn{1}{|c|}{\multirow{6}{*}{BPIM}} & \multicolumn{1}{c|}{\multirow{2}{*}{0}} & \multicolumn{1}{c|}{\IMM}      & \multicolumn{1}{c|}{188.14 $\pm$ 0.63}  & \multicolumn{1}{c|}{677.81  $\pm$ 0.96}  & \multicolumn{1}{c|}{1128.41 $\pm$ 1.07} & \multicolumn{1}{c|}{1813.26 $\pm$ 1.16} \\ \cline{3-7} 
			\multicolumn{1}{|c|}{}                      & \multicolumn{1}{c|}{}                   & \multicolumn{1}{c|}{{\IMMBPIM}} & \multicolumn{1}{c|}{189.12 $\pm$ 0.64 ($< 1\%$)}  & \multicolumn{1}{c|}{677.67 $\pm$ 0.94 ($< 1\%$)}   & \multicolumn{1}{c|}{1134.77 $\pm$ 1.05 ($< 1\%$)} & \multicolumn{1}{c|}{1819.71 $\pm$ 1.14 ($< 1\%$)} \\ \cline{2-7} 
			\multicolumn{1}{|c|}{}                      & \multicolumn{1}{c|}{\multirow{2}{*}{1}} & \multicolumn{1}{c|}{\IMM}      & \multicolumn{1}{c|}{127.04 $\pm$ 0.43}  & \multicolumn{1}{c|}{508.31 $\pm$ 0.72}   & \multicolumn{1}{c|}{893.64 $\pm$ 0.86}  & \multicolumn{1}{c|}{1509.87 $\pm$ 1.00} \\ \cline{3-7} 
			\multicolumn{1}{|c|}{}                      & \multicolumn{1}{c|}{}                   & \multicolumn{1}{c|}{{\IMMBPIM}} & \multicolumn{1}{c|}{132.34 $\pm$ 0.42 ($+ 4.2\%$)}  & \multicolumn{1}{c|}{517.91 $\pm$ 0.70 ($+ 1.9\%$)}   & \multicolumn{1}{c|}{909.68 $\pm$ 0.84 ($+ 1.8\%$)}  & \multicolumn{1}{c|}{1530.55 $\pm$ 0.96 ($+ 1.4\%$)} \\ \cline{2-7} 
			\multicolumn{1}{|c|}{}                      & \multicolumn{1}{c|}{\multirow{2}{*}{3}} & \multicolumn{1}{c|}{\IMM}      & \multicolumn{1}{c|}{153.42 $\pm$ 0.51}  & \multicolumn{1}{c|}{578.23 $\pm$ 0.80}   & \multicolumn{1}{c|}{995.54 $\pm$ 0.95}  & \multicolumn{1}{c|}{1646.73 $\pm$ 1.08} \\ \cline{3-7} 
			\multicolumn{1}{|c|}{}                      & \multicolumn{1}{c|}{}                   & \multicolumn{1}{c|}{{\IMMBPIM}} & \multicolumn{1}{c|}{155.73 $\pm$ 0.49 ($+ 1.5\%$)}  & \multicolumn{1}{c|}{583.91 $\pm$ 0.79 ($+ 1.0\%$)}   & \multicolumn{1}{c|}{1009.91 $\pm$ 0.94 ($+ 1.4\%$)} & \multicolumn{1}{c|}{1667.51 $\pm$ 1.04 ($+ 1.3\%$)} \\ 
			\hline
		\end{tabular}
	}
\end{table*}

\begin{table*}[t]
	\centering
	\caption{The performance of influence spread on Flixster.}
	\label{tab:flixspread}
	\resizebox{6.5in}{!}{%
		\begin{tabular}{ccccccc}
			\hline
			\multicolumn{1}{|c|}{Flixster}                      & \multicolumn{1}{c|}{Case}               & \multicolumn{1}{c|}{Method}   & \multicolumn{1}{c|}{10}             & \multicolumn{1}{c|}{50}              & \multicolumn{1}{c|}{100}            & \multicolumn{1}{c|}{200}            \\
			\hline
			\multicolumn{1}{|c|}{\multirow{6}{*}{PIM}}  & \multicolumn{1}{c|}{\multirow{3}{*}{0}} & \multicolumn{1}{c|}{IMM}      & \multicolumn{1}{c|}{32.03 $\pm$ 0.39}   & \multicolumn{1}{c|}{110.89 $\pm$ 0.6}   & \multicolumn{1}{c|}{166.58 $\pm$ 0.68}  & \multicolumn{1}{c|}{252.28 $\pm$ 0.78}  \\ \cline{3-7} 
			\multicolumn{1}{|c|}{}                      & \multicolumn{1}{c|}{}                   & \multicolumn{1}{c|}{ASV-RR}   & \multicolumn{1}{c|}{31.59 $\pm$ 0.39}   & \multicolumn{1}{c|}{119.27 $\pm$ 0.64}  & \multicolumn{1}{c|}{170.25 $\pm$ 0.71}  & \multicolumn{1}{c|}{281.37 $\pm$ 0.84}  \\ \cline{3-7} 
			\multicolumn{1}{|c|}{}                      & \multicolumn{1}{c|}{}                   & \multicolumn{1}{c|}{IMM-PIM}  & \multicolumn{1}{c|}{44.1 $\pm$ 0.43 ($+ 37.7\%$)}   & \multicolumn{1}{c|}{142.37 $\pm$ 0.65  ($+ 19.4\%$)} & \multicolumn{1}{c|}{223.52 $\pm$ 0.73 ($+ 31.3\%$)}  & \multicolumn{1}{c|}{339.79 $\pm$ 0.82 ($+ 20.8\%$)}  \\ \cline{2-7} 
			\multicolumn{1}{|c|}{}                      & \multicolumn{1}{c|}{\multirow{3}{*}{3}} & \multicolumn{1}{c|}{IMM}      & \multicolumn{1}{c|}{51.68 $\pm$ 0.33}   & \multicolumn{1}{c|}{165.15 $\pm$ 0.51}  & \multicolumn{1}{c|}{259.12 $\pm$ 0.59}  & \multicolumn{1}{c|}{399.44 $\pm$ 0.67}  \\ \cline{3-7} 
			\multicolumn{1}{|c|}{}                      & \multicolumn{1}{c|}{}                   & \multicolumn{1}{c|}{ASV-RR}   & \multicolumn{1}{c|}{54.5 $\pm$ 0.33}    & \multicolumn{1}{c|}{175.98 $\pm$ 0.53}  & \multicolumn{1}{c|}{256.63 $\pm$ 0.6}   & \multicolumn{1}{c|}{442.88 $\pm$ 0.69}  \\ \cline{3-7} 
			\multicolumn{1}{|c|}{}                      & \multicolumn{1}{c|}{}                   & \multicolumn{1}{c|}{IMM-PIM}  & \multicolumn{1}{c|}{105.39 $\pm$ 0.34  ($+ 93.4\%$)}  & \multicolumn{1}{c|}{311.99 $\pm$ 0.54 ($+ 77.3\%$)}  & \multicolumn{1}{c|}{475.21 $\pm$ 0.64 ($+ 83.4\%$)}  & \multicolumn{1}{c|}{702.76 $\pm$ 0.71 ($+ 58.7\%$)} \\ \hline
			\hline
			
			\multicolumn{1}{|c|}{\multirow{4}{*}{BIM}}  & \multicolumn{1}{c|}{\multirow{2}{*}{0}} & \multicolumn{1}{c|}{IMM}      & \multicolumn{1}{c|}{4693.78 $\pm$ 1.13} & \multicolumn{1}{c|}{5041.03 $\pm$ 1.02} & \multicolumn{1}{c|}{5315.41 $\pm$ 0.96} & \multicolumn{1}{c|}{5712.07 $\pm$ 0.92} \\ \cline{3-7} 
			\multicolumn{1}{|c|}{}                      & \multicolumn{1}{c|}{}                   & \multicolumn{1}{c|}{IMM-BIM}  & \multicolumn{1}{c|}{4694.19 $\pm$ 1.14 ($< 1\%$)} & \multicolumn{1}{c|}{5046.5 $\pm$ 1.04 ($< 1\%$)}  & \multicolumn{1}{c|}{5327.35 $\pm$ 0.98 ($< 1\%$)} & \multicolumn{1}{c|}{5718.48 $\pm$ 0.93 ($< 1\%$)} \\ \cline{2-7} 
			\multicolumn{1}{|c|}{}                      & \multicolumn{1}{c|}{\multirow{2}{*}{3}} & \multicolumn{1}{c|}{IMM}      & \multicolumn{1}{c|}{6892.51 $\pm$ 1.01} & \multicolumn{1}{c|}{7050.49 $\pm$ 0.96} & \multicolumn{1}{c|}{7200.86 $\pm$ 0.92} & \multicolumn{1}{c|}{7471.4 $\pm$ 0.9}   \\ \cline{3-7} 
			\multicolumn{1}{|c|}{}                      & \multicolumn{1}{c|}{}                   & \multicolumn{1}{c|}{IMM-BIM}  & \multicolumn{1}{c|}{6898.82 $\pm$ 1.01 ($< 1\%$)} & \multicolumn{1}{c|}{7081.8 $\pm$ 0.98 ($< 1\%$)}  & \multicolumn{1}{c|}{7258.11 $\pm$ 0.96 ($< 1\%$)} & \multicolumn{1}{c|}{7543.74 $\pm$ 0.91 ($+ 1.0\%$)} \\ \hline
			\hline
			\multicolumn{1}{|c|}{\multirow{4}{*}{BPIM}} & \multicolumn{1}{c|}{\multirow{2}{*}{0}} & \multicolumn{1}{c|}{IMM}      & \multicolumn{1}{c|}{294.41 $\pm$ 0.49}  & \multicolumn{1}{c|}{877.56 $\pm$ 0.7}   & \multicolumn{1}{c|}{1293.56 $\pm$ 0.74} & \multicolumn{1}{c|}{1860.55 $\pm$ 0.76} \\ \cline{3-7} 
			\multicolumn{1}{|c|}{}                      & \multicolumn{1}{c|}{}                   & \multicolumn{1}{c|}{IMM-BPIM} & \multicolumn{1}{c|}{299.74 $\pm$ 0.48 ($+ 1.8\%$)}  & \multicolumn{1}{c|}{887.56 $\pm$ 0.68 ($+ 1.1\%$)}  & \multicolumn{1}{c|}{1307.58 $\pm$ 0.71 ($+ 1.1\%$)} & \multicolumn{1}{c|}{1875.77 $\pm$ 0.72 ($< 1\%$)} \\ \cline{2-7} 
			\multicolumn{1}{|c|}{}                      & \multicolumn{1}{c|}{\multirow{2}{*}{3}} & \multicolumn{1}{c|}{IMM}      & \multicolumn{1}{c|}{194.11 $\pm$ 0.34}  & \multicolumn{1}{c|}{599.17 $\pm$ 0.54}  & \multicolumn{1}{c|}{925.57 $\pm$ 0.61}  & \multicolumn{1}{c|}{1414.42 $\pm$ 0.65} \\ \cline{3-7} 
			\multicolumn{1}{|c|}{}                      & \multicolumn{1}{c|}{}                   & \multicolumn{1}{c|}{IMM-BPIM} & \multicolumn{1}{c|}{201.02 $\pm$ 0.34 ($+ 3.6\%$)}  & \multicolumn{1}{c|}{625.58 $\pm$ 0.53 ($+ 4.4\%$)}  & \multicolumn{1}{c|}{972.24 $\pm$ 0.58 ($+ 5.0\%$)}  & \multicolumn{1}{c|}{1473.65 $\pm$ 0.64 ($+ 4.2\%$)} \\ \hline
		\end{tabular}
	}
\end{table*}

\begin{table*}[t!]
	\centering
	\caption{Running time results on datasets (in seconds).} \label{tab:time2}
	\resizebox{4.5in}{!}{
		\begin{tabular}{|l|c|c|c|c|c|}
			\hline
			Data & {\sf \IMMBIM}     & {\sf \IMMPIM}    & {\sf \IMMBPIM}   & {\IMM}     & {\ASVRR} \\ \hline
			NetHEPT  & 0.7534  & 195.75 & 48.123 & 1.9712 & 74.175 \\ \hline
			Flixster & 1.3516  & 955.45 & 218.51 & 5.1072  & 235.34  \\ \hline
		\end{tabular}
	}
\end{table*}

\begin{table*}[t]
	\centering
	\caption{Running time results on $\varepsilon$ value (in seconds).} \label{tab:time3}
	\resizebox{6.5in}{!}{
		\begin{tabular}{|l|c|c|c|c|c|}
			\hline
			NetHEPT ($\varepsilon$)  & 0.1              & 0.2             & 0.3             & 0.4             & 0.5             \\ \hline
			\IMMBIM  & 5098.01 (0.73s)  & 5075.54 (0.28s) & 5063.49 (0.21s) & 5037.64 (0.18s) & 5004.03 (0.13s) \\ \hline
			\IMMPIM  & 289.16 (492.49s) & 290.67 (136.1s)    & 287.61 (67.5s)    & 288.22 (41.1s)    & 287.33 (29.2s) \\ \hline
			\IMMBPIM & 1819.19 (53.6s)  & 1811.79 (14.3s) & 1792.67 (7.1s)  & 1781.42 (4.7s)  & 1768.05 (3.4s)  \\ \hline
		\end{tabular}
	}
\end{table*}

\begin{table}[t!]
	\centering
	\caption{Running time results on self-activation probability distribution (in seconds)} \label{table:time4}
	\resizebox{3.5in}{!}{
		\begin{tabular}{|c|c|c|c|c|c|}
			\hline
			{}$\alpha_u$ & {[}0-0.2{]} & {[}0.2-0.4{]} & {[}0.4-0.6{]} & {[}0.6-0.8{]} & {[}0.8-1.0{]} \\ \hline
			\IMMBIM                     & 0.839       & 0.456         & 0.328         & 0.259         & 0.226         \\ \hline
			\IMMBPIM                    & 53.92       & 63.47         & 73.22         & 117.03        & 126.82        \\ \hline
			\IMMPIM                     & 492.49      & 378.04        & 347.67        & 314.58        & 296.72        \\ \hline
		\end{tabular}
	}
\end{table}

\section{Additional Experimental Results}

\noindent
{\bf Self-activation parameters and test cases. \ \ }
In practice, self-activation delays can be estimated from the users' access patterns to online
social networks, and self-activation probabilities can be estimated from the
fraction of times users' participating in information cascades not due to the influence
from the neighbors or external selections as seed users.
Unfortunately for the datasets we use, these information are not available.
Instead, we use synthetic settings, focusing on whether the knowledge of the self-activation
behaviors would benefit our algorithm design.
For self-activation probability $q(u)$ of node $u$, we 
first randomly select a value $\beta_u$ from $[0,c]$ as
a node $u$'s base value, then we further test five cases:
(0) uniform: $q(u) = \alpha_u$; 
(1) positively correlated: $q(u)$ is positively correlated with $u$'s out-degree
$d^+(u)$, in particular $q(u) = \min\{\beta_u \cdot d^+(u), 1\}$;
(2) negatively correlated: $q(u) = \beta_u / d^+(u)$;
(3) random mixing of cases 0 and 1: randomly pick half of the nodes with 
$q(u) = \alpha_u$ and the other half with $q(u) = \min\{\beta_u \cdot d^+(u), 1\}$;
(4) random mixing of cases 0 and 2: randomly pick half of the nodes with 
$q(u) = \alpha_u$ and the other half with $q(u) = \beta_u / d^+(u)$.
These five cases aim at scenarios where all users are equally likely to react to
a campaign (case 0), high-degree nodes (usually more influential) are more likely
or less likely to react to the campaigns, and mixture of uniform behavior and a 
correlation (or reverse correlation) behavior.
We set $c = 2$ for BPIM and PIM tests,
because otherwise the preemptive influence spread for PIM is too small.
For self-activation delays, we use exponential distribution with rate $1$ for all nodes.
We already vary the self-activation behaviors through the self-activation probabilities,
and thus we simply keep the self-activation delay distributions uniform.
We also use the same exponential distribution for propagation delay distributions.

\subsection{Effectiveness Analysis}

We provide the full test case results for NetHEPT in Table~\ref{tab:heptspread}.
Each influence spread result is an average over 10000 simulations runs for NetHEPT and
Flixter
The number following the $\pm$ sign is the $95\%$ confidence radius, and
the percentage in parenthesis is the improvement of our algorithm over the best
baseline result given above it.
The results on PIM clearly show that our {\IMMPIM} algorithm has improvement in
the preemptive influence spread achieved comparing against baselines {\IMM} and
{\ASVRR} --- the improvements is from $20\%$ to $98.1\%$.
This shows that, when identifying top influencers in the presence of self activation behavior, it is important to incorporate the knowledge about self activation into the algorithm design

For BIM and BPIM, the improvement is less significant. In particular, the uniform case
does not show significant improvement, while for positively correlated cases, 
the improvement could be $1\%$ to $4\%$.
This suggests that, due to the boost of seed nodes, the knowledge of natural self-activation
behaviors become less important, and some times it is ok to use the self-activation
oblivious {\IMM} algorithm, but if a few percentage of improvement is still important,
the extra knowledge on self-activation together with our algorithms are still beneficial.
We ignore cases 2 and 4, and in general the improvement is less significant.

The results on Flixster are similar, we list the cases 0 and 3 results in Tables~\ref{tab:flixspread}
which show significant improvements over {\IMM} and {\ASVRR} for PIM
and outperform {\IMM} for BIM and BPIM.

\subsection{Running Time Analysis}

Table~\ref{tab:time2} reports the running time of all algorithms on both datasets,
by using the default setting with the seed set size $k = 200$ in test case 3.
We can clearly see the order of running time is $\IMMBIM < \IMM < \IMMBPIM < \ASVRR < \IMMPIM$
(we ignore the prefix {\IMM} in our algorithm to fit the table width).
This is inline with our theoretical analysis, which shows that the running time is
inversely proportional to the optimal value of each problem.
For example, the optimal solution of BIM is larger than that of the classical 
influence maximization because BIM has self-activated nodes contributing extra
influence spread, and PIM has the smallest optimal value because 
the self-activation probabilities are small in general and the optimal set has
to compete with other self-activated nodes on preemptive influence spread.
{\IMMBPIM} and {\IMMPIM} are further slower due to the Dijkstra-like reverse simulation, which
takes more time than the simple breadth-first-search simulation.
But even for the slowest {\IMMPIM} algorithm, on the Flixster dataset with more than hundred thousands nodes and edges, it could complete in less than 16 minutes on our laptop test machine.
Besides the dataset, running time is also related to the $\varepsilon$.

As shown in Table \ref{tab:time3}, we test $\varepsilon$ values from 0.1 to 0.5 on NetHEPT for the proposed algorithms.
We can see influence difference is less than 3\% when $\varepsilon$ increases from 0.1 to 0.5 for {\IMMBIM} and {\IMMBPIM},
and there is almost no influence difference for {\IMMPIM}, which indicates it could be accelerated by increasing $\varepsilon$.

Besides the dataset and $\varepsilon$, the distribution of the self-activation probability $q(u)$ would infect the running time of the proposed algorithm.
We randomly select value from different uniform self-activation probability distributions $[0,0.2]$, $[0.1,0.3]$, $[0.2,0.4]$, $[0.3,0.5]$, $[0.4,0.6]$ for the proposed algorithms.
From Table \ref{table:time4}, we can clearly see that for both {\IMMBIM} and {\IMMPIM}, they spend less running time with larger self-activation probability distribution.
On the contrary, {\IMMBPIM} runs faster with a smaller distribution of self-activation probabilities.
In summary, the running times of the proposed algorithms are sensitive to the self-activation probability distribution of the nodes.

\end{document}